\newcommand{\titel}{Computing Vertex-Disjoint Paths using MAOs}
\definecolor{hellblau}{rgb}{0.2,0.4,1} 
\definecolor{dunkelblau}{rgb}{0,0,0.8}
\definecolor{dunkelgruen}{rgb}{0,0.5,0}
\theoremstyle{plain} 
	\newtheorem{satz}{Satz}[] 
	\newtheorem{theorem}[satz]{Theorem}
	\newtheorem{lemma}[satz]{Lemma}
	\newtheorem{observation}[satz]{Observation}
	\newtheorem*{invariants}{Invariants}
\theoremstyle{remark} 
	\newtheorem*{remark}{\textbf{Remark}} 
	\newtheorem*{problem}{Open Problem}
\theoremstyle{definition} 
	\newtheorem{corollary}[satz]{Corollary}
\newcommand{\letzter}{\emph{end}}
\newcommand{\vorletzter}{\emph{sec}}
\newcommand{\links}{\emph{left}}
\begin{document}
	\title{\titel}
		\author{Johanna E.\ Preißer\footnote{This research was supported by the DFG grant SCHM 3186/1-1.}\\
		Institute of Mathematics\\
		TU Ilmenau \and
		Jens M. Schmidt\footnotemark[1]\\
		Institute of Mathematics\\
		TU Ilmenau}
	\date{}
	\maketitle

\begin{abstract}
Let $G$ be a graph with minimum degree $\delta$. It is well-known that maximal adjacency orderings (MAOs) compute a vertex set $S$ such that every pair of $S$ is connected by at least $\delta$ internally vertex-disjoint paths in $G$.

We present an algorithm that, given any pair of $S$, computes these $\delta$ paths in linear time $O(n+m)$. This improves the previously best solutions for these special vertex pairs, which were flow-based. Our algorithm simplifies a proof about pendant pairs of Mader and makes a purely existential proof of Nagamochi algorithmic.
\end{abstract}

\section{Introduction}
Vertex-connectivity is a fundamental parameter of graphs that, according to Menger~\cite{Menger1927}, can be characterized by the existence of internally disjoint paths between vertex pairs. Thus, much work has been devoted to the following question: Given a number $k$, a simple graph $G=(V,E)$, and two vertices of $G$, compute $k$ internally disjoint paths between these vertices if such paths exist. Despite all further efforts, the traditional flow-based approach by Even and Tarjan~\cite{Even1975} and Karzanov~\cite{Karzanov1973} gives still the best bound $O(\min\{k,\sqrt{n}\}m)$ for this task, where $n := |V|$ and $m := |E|$.

Our research is driven by the question whether $k$ internally disjoint paths can be computed faster. We have no general answer, but show that for specific pairs of vertices, this can actually be done using \emph{maximal adjacency orderings (MAOs)}. MAOs order the vertices of a graph, are also known under the name \emph{maximum cardinality search}, and can be computed in time $O(n+m)$~\cite{Tarjan1984}; we give a definition of MAOs later. On a side note, we exhibit apparently forgotten similarities of MAOs to a very early proof by Mader about pendant pairs in 1971~\cite{Mader1973,Mader1971b}. In fact, MAOs (and maximum cardinality search) can be seen as a simpler version of Mader's result.

Consider a simple graph $G$ with minimum degree $\delta$ and a MAO $<$ of $G$. Let a subset $S$ of vertices be \emph{$k$-connected} if $G$ contains $k$ internally vertex-disjoint paths between every two vertices of $S$. Henzinger proved for every $1 \leq k \leq \delta$ that the last $\delta-k+2$ vertices of $<$ are $k$-connected~\cite{Henzinger1997a}. Nagamochi~\cite{Nagamochi2006} improved this result to the more general vertex sets of trees derived from the \emph{forest decomposition} of the MAO; one of these trees actually contains the last $\delta-k+2$ vertices given above. Nagamochi's proof uses the machinery of mixed connectivity, which generalizes both edge- and vertex-connectivity.

If we desire $\delta$-connected vertex sets, this may result only in a set of size two in the worst case. However, in practice, many more of these sets may occur and each of them may have a much larger size. Even in the worst-case, the approach allows us to increase the size $\delta-k+2$ of these sets at the expense of linearly decreasing their connectivity to any prescribed $k < \delta$.

We consider the problem of computing $k$ internally vertex-disjoint paths between any vertex pair $\{s,t\}$ that is contained in one of the above $k$-connected sets. By the above results, we already know that these paths exist. However, the proofs of Henzinger and Nagamochi are non-constructive and therefore do not give any faster algorithm for finding these paths than the traditional flow-based one.

We give an algorithm that computes these $k$ paths in linear time $O(n+m)$. The algorithm proceeds by a right-to-left sweep in the MAO through the vertices of the corresponding set. We also show how the computation can be extended to find the $k$ internally vertex-disjoint paths between a vertex and a vertex set, and between two vertex sets, whose existence was shown by Menger~\cite{Menger1927}.

\paragraph{Certifying Algorithms.} Being able to compute $k$ internally vertex-disjoint paths has a benefit that purely existential proofs or algorithms that only argue about vertex separators do not have: It certifies the connectivity between the two vertices. This has been used for \emph{certifying algorithms}, as proposed in~\cite{McConnell2011}.

E.g., the famous \emph{min-cut} algorithm of Nagamochi and Ibaraki~\cite{Nagamochi1992b,Frank1994}, as simplified by Stoer and Wagner~\cite{Stoer1997} (and initiated by Mader) uses the $k$-connected vertex sets above by just setting $k := \delta$ and contracting the last two $\delta$-connected (and thus $\delta$-edge-connected) vertices of a MAO iteratively. For unweighted multigraphs, this can be easily made certifying: The certificate just stores the $k$ edge-disjoint paths between the last two vertices $\{s,t\}$ of the MAO in every step; the global $k$-edge-connectivity then follows by transitivity. In fact, the desired $k$ edge-disjoint paths are given directly by considering the (edge-disjoint) trees $T_1,\ldots,T_k$ of the MAO that contain $t$ and taking the unique $s$-$t$-path in each of them.

Using more involved methods, Arikati and Mehlhorn~\cite{Mehlhorn1999b} showed that the Nagamochi-Ibaraki algorithm can be made certifying even for weighted graphs, again without increasing the asymptotic running time and space. For the problem of testing a graph on being $k$-(vertex-)connected, finding a good certificate seems to be an open graph theoretic problem, even when $k$ is fixed:

\begin{problem}
For $k \in \mathbb{N}$ and a simple graph $G$, find a small and easy-to-verify certificate that a graph is $k$-connected.
\end{problem}

So far, linear-time certifying algorithms are only known for $k \leq 3$~\cite{Whitney1932a,Schmidt2013}. For $k \geq 4$, the best known certifying algorithm still seems to be the traditional flow-based one~\cite{Even1975}, which achieves a running time of $O((k+\sqrt{n})k\sqrt{n}m)$. Should at some day the best algorithm for computing vertex-connectivity be based on MAOs, our data structure provides the means to turn this into an efficient certificate.

\section{Maximal Adjacency Orderings}\label{sec:prel}
Throughout this paper, our input graph $G=(V,E)$ is simple, unweighted and of minimum degree $\delta$. We assume standard graph theoretic notation as in~\cite{Diestel2010}. A \emph{maximal adjacency ordering} (MA Ordering; MAO) $<$ of $G$ is a total order $1,\dots,n$ on $V$ such that, for every two vertices $v < w$, $v$ has at least as many neighbors in $\{1, \dots, v-1\}$ as $w$ has. For ease of notation, we always identify the vertices of $G$ with their position in $<$.

Every MAO $<$ partitions $E$ into forests $F_1,\dots,F_m$ (some of which may be empty) as follows: If $v>1$ is a vertex of $G$ and $w_1 < \dots < w_l$ are the neighbors of $v$ in $\{1,\dots,v-1\}$, the edge $\{w_i,v\}$ belongs to $F_i$ for all $i \in \{1,\dots,l\}$. For every $i$, the graph $(V,F_i)$ is a forest (we refer to~\cite[Section~2.2]{Nagamochi2008} for a proof). For the sake of conciseness, we identify this forest with its edge set $F_i$. The partition of $E$ into these forests is called the \emph{forest decomposition} of $<$. For vertices $v < w$, we say $v$ is \emph{left} of $w$. If there is an edge between $v$ and $w$, we call this a \emph{left-edge} of $w$.

For any $k$, we allow to compute $k$ internally vertex-disjoint paths between any two vertices that are contained in a tree $T_k$ of the forest $F_k$. Hence, throughout the paper, let $s>1$ be an arbitrary but fixed vertex of $G$ and let $k$ be a positive integer that is at most the number of left-edges of $s$. The vertex $s$ will be the start vertex of the $k$ internally vertex-disjoint paths to find (the end vertex will be left of $s$). E.g., if we choose $s$ as the last vertex of the MAO (or any other vertex with at least that many left-edges), $k$ can be chosen as any value that is at most the degree of vertex $n$; in particular, $k$ can be chosen arbitrary in the range $1,\dots,\delta$, as claimed in the introduction.

For $i \in \{1, \dots, k\}$, let $T_i$ the component of $F_i$ that contains $s$. As $i \leq k$, $T_i$ is a tree on at least two vertices. Let the smallest vertex $r_i$ of $T_i$ with respect to $<$ be the \emph{root} of $T_i$. For the purpose of this paper, it suffices to consider the subgraph of $G$ induced by the edges of $T_1, \dots, T_k$.

\begin{lemma}[{\cite[Lemma 2.25]{Nagamochi2008}}]\label{lem:interval}
Let $i\in \{1, \dots, k\}$. Then $V(T_i)$ consists of the consecutive vertices $r_i, r_i+1, \dots, w$ in $<$ such that $s \leq w$. Moreover, for each vertex $v \in T_i \setminus \{r_i\}$, the vertex set $\{r_i, r_i+1, \dots, v\}$ induces a connected subgraph of $T_i$.
\end{lemma}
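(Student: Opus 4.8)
The plan is to make visible the ``parent structure'' that is hidden inside the forest $F_i$, and then to argue by a single left-to-right sweep. For a vertex $v$ write $d(v):=|N(v)\cap\{1,\dots,v-1\}|$ for its number of left-edges, and call $v$ \emph{full} if $d(v)\ge i$. If $v$ is full, let $w_i(v)$ be its $i$-th smallest left-neighbour; by the definition of the forest decomposition, the unique left-edge of $v$ that lies in $F_i$ is exactly $\{w_i(v),v\}$, and conversely every edge of $F_i$ has this form. Hence $E(F_i)=\{\,\{w_i(v),v\}:v\text{ full}\,\}$, so in each component of $F_i$ the assignment $v\mapsto w_i(v)$ is a parent pointer that strictly decreases the position; iterating it from any vertex therefore terminates at a vertex that is not full, and this vertex is unique in its component (a path between two non-full vertices would begin and end by increasing the position, hence would contain an interior local maximum, and such a vertex would have two left-edges in $F_i$ — impossible). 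For $T_i$ this unique non-full vertex has to be $r_i$: if $r_i$ were full, its $F_i$-left-edge $\{w_i(r_i),r_i\}$ would pull the strictly smaller vertex $w_i(r_i)$ into $T_i$, contradicting minimality. So $d(r_i)\le i-1$, and \emph{every} vertex of $T_i$ other than $r_i$ is full and has its parent $w_i(\cdot)$ a strictly smaller vertex of $T_i$.

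Next I would establish the interval claim. Let $w:=\max V(T_i)$ (so $w\ge s$ since $s\in T_i$, and $w>r_i$ since $|V(T_i)|\ge 2$), and suppose, for contradiction, that some vertex of $\{r_i+1,\dots,w-1\}$ is not in $T_i$; let $v$ be the smallest such, so that $\{r_i,\dots,v-1\}\subseteq V(T_i)$. Following parents from $w$ yields a strictly decreasing sequence $w=z_0>z_1>\dots>z_t=r_i$ that stays inside $T_i$; since $v\notin V(T_i)$ it is skipped, so there is an index $j$ with $z_j>v>z_{j+1}=w_i(z_j)$. Then $w_1(z_j)<\dots<w_i(z_j)=z_{j+1}<v$ are $i$ left-neighbours of $z_j$ lying below $v$, so the MAO inequality for $v<z_j$ gives $d(v)\ge i$, i.e.\ $v$ is full. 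Moreover $w_i(v)\ge r_i$: otherwise $w_1(v),\dots,w_i(v)$ would be $i$ left-neighbours of $v$ lying below $r_i$, and the MAO inequality for $r_i<v$ would force $d(r_i)\ge i$, contradicting $d(r_i)\le i-1$. Hence $w_i(v)\in\{r_i,\dots,v-1\}\subseteq V(T_i)$, and the edge $\{w_i(v),v\}\in F_i$ forces $v\in V(T_i)$, a contradiction. Therefore $V(T_i)=\{r_i,r_i+1,\dots,w\}$ with $s\le w$.

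The ``moreover'' part then follows by a one-line induction on $v$ from $r_i$ to $w$: for $r_i<v\le w$ the vertex $v$ lies in $V(T_i)\setminus\{r_i\}$, hence is full, so $w_i(v)$ is a vertex of $T_i$ strictly smaller than $v$, i.e.\ $w_i(v)\in\{r_i,\dots,v-1\}$; attaching $v$ and the $T_i$-edge $\{w_i(v),v\}$ to the (inductively) connected graph $T_i[\{r_i,\dots,v-1\}]$ keeps it connected.

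The main obstacle is the ``no gaps'' statement — ruling out that $V(T_i)$ breaks into several blocks. Everything here is channelled through exactly two applications of the MAO inequality (one to force a skipped vertex $v$ to be full, and one to prevent a full vertex of $T_i$ from having its $i$-th left-neighbour to the left of $r_i$), so the delicate part is setting up precisely these two comparisons, on top of the preliminary fact that $r_i$ is the only non-full vertex of $T_i$.
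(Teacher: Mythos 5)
Your proof is correct, and it fills a gap the paper deliberately leaves open: the paper does not prove this lemma at all but imports it from Nagamochi and Ibaraki's book (Lemma 2.25), so there is no in-paper argument to compare against. Your route is a clean, self-contained one. The preliminary structural step is sound: every $F_i$-edge has the form $\{w_i(v),v\}$ with $v$ full, so the map $v\mapsto w_i(v)$ is a position-decreasing parent pointer, the local-maximum argument correctly shows each component has exactly one non-full vertex, and minimality of $r_i$ identifies it (giving $d(r_i)\le i-1$, which the paper also derives separately right after the lemma). The core ``no gaps'' argument is where the content lies, and your two applications of the MAO inequality are exactly the right comparisons: the skipped vertex $v$ is sandwiched between consecutive parents $z_{j+1}<v<z_j$, so $z_j$'s first $i$ left-neighbours all lie below $v$ and force $d(v)\ge i$; and $w_i(v)<r_i$ would force $d(r_i)\ge i$, contradicting $d(r_i)\le i-1$, so the $F_i$-edge of $v$ lands in $\{r_i,\dots,v-1\}\subseteq V(T_i)$ and pulls $v$ into $T_i$. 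The induction for the ``moreover'' part is then immediate since each $v>r_i$ attaches to the interval below it via its $T_i$-edge $\{w_i(v),v\}$. This is the same spirit as the textbook proof (consecutiveness extracted from the MAO degree inequality), just organized around explicit parent pointers; only trivia remain implicit, e.g.\ $|V(T_i)|\ge 2$ comes from the standing assumption $i\le k\le$ number of left-edges of $s$, and the interval statement is vacuously fine in the degenerate one-vertex case anyway.
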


Hence, for every $i\in \{1, \dots,k\}$, every vertex $v>r_i$ of $T_i$ has exactly one left-edge that is in $T_i$ and thus at least $i$ left-edges that are in $G$. Let $\links_i(v)$ be the end vertex of the left-edge of $v$ in $F_i$. The root $r_i$ of $T_i$ has left-degree exactly $i-1$, as if it had more, $r_i$ would have a left-edge in $F_i$ and thus not be the root of $T_i$ and, if it had less, the left-degree of $r_i+1$ cannot be at least $i$, as this violates the MA Ordering (this uses that $G$ is simple).

We conclude that $r_1 < r_2 \dots <r_k$. This, the definition of $F_i$ and Lemma~\ref{lem:interval} imply the following corollary.

\begin{corollary}\label{cor:leftedges}
Let $i < j \leq k$ and let $v$ be a vertex with $r_j < v < s$. Then $v$ is in $T_j$ and $T_i$, $r_i\leq \links_i(v) < \links_j(v) <v$ and $r_j \leq \links_j(v)$.
\end{corollary}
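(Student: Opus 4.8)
The plan is to assemble the three assertions directly from Lemma~\ref{lem:interval}, the already established chain $r_1 < r_2 < \dots < r_k$, and the defining rule of the forest decomposition; I do not expect a genuine obstacle, only a couple of places where one must be careful about which interval $v$ lies in.

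First I would pin down membership. Since $i < j$ we have $r_i < r_j$, and combining this with the hypothesis $r_j < v < s$ gives $r_i < v$ and $r_j < v$, so in particular $v$ is distinct from both roots $r_i$ and $r_j$. By Lemma~\ref{lem:interval}, $V(T_i)$ is the consecutive block $\{r_i, r_i+1, \dots, w\}$ with $s \le w$; hence $v < s \le w$ together with $r_i < v$ places $v$ inside this block, i.e.\ $v \in V(T_i)$. The identical argument applied to $T_j$ yields $v \in V(T_j)$, which is the first claim.

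Next I would identify the two left-edges. Because $v$ is a non-root vertex of $T_i$ and of $T_j$, it has exactly one left-edge lying in $T_i$, the edge $\{\links_i(v), v\}$, and exactly one lying in $T_j$, the edge $\{\links_j(v), v\}$; these belong to $F_i$ and $F_j$ respectively. Now recall from the definition of the forest decomposition that if $w_1 < \dots < w_l$ are the neighbors of $v$ that are left of $v$, then $\{w_p, v\} \in F_p$ for each $p$; since (as noted for vertices to the right of a root) $v$ has at least $j$ left-edges in $G$, we have $l \ge j$, and $\links_i(v) = w_i$, $\links_j(v) = w_j$. From $i < j \le l$ this gives $\links_i(v) = w_i < w_j = \links_j(v)$, and $\links_j(v) = w_j < v$ because it is a left-neighbor of $v$; this is the middle chain of the second claim. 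Finally, the edge $\{\links_i(v), v\}$ lies in $T_i$ and $v \in V(T_i)$, so $\links_i(v) \in V(T_i)$ and therefore $\links_i(v) \ge r_i$ as $r_i$ is the smallest vertex of $T_i$; the same reasoning for $T_j$ gives $\links_j(v) \ge r_j$, completing the statement. The only subtlety worth flagging is the step that puts $v$ into $V(T_i)$ rather than merely into $V(T_j)$: this is exactly where the ordering $r_i < r_j$ of the roots is used, in conjunction with $v < s$ and the fact that the right endpoint $w$ of the block $V(T_i)$ satisfies $w \ge s > v$.
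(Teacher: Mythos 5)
Your proof is correct and follows exactly the route the paper intends: the corollary is stated there as an immediate consequence of $r_1 < r_2 < \dots < r_k$, Lemma~\ref{lem:interval} (the interval structure of $V(T_i)$), and the definition of the forest decomposition, and your write-up simply spells out these three ingredients, including the right observation that $\links_i(v)$ and $\links_j(v)$ are the $i$-th and $j$-th smallest left-neighbors and lie in $T_i$ and $T_j$ respectively.
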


For a vertex-subset $S \subseteq V$, let $\overline{S} := V \setminus S$. For convenience, we will denote sets $\overline{\{v\}}$ by $\overline{v}$. For a vertex-subset $S \subseteq V$, a set of paths is $S$-\emph{disjoint} if no two of them intersect in a vertex that is contained in $S$. Thus, $V$-disjointness is the usual vertex-disjointness and a set of paths is $\overline{v}$-\emph{disjoint} if every two of them intersect in either the vertex $v$ or not at all.

We represent paths as lists of vertices. The \emph{length} of a path is the number of edges it contains. For a path $A$, let $\letzter(A)$ be the last vertex of this list and, if the path has length at least one, let $\vorletzter(A)$ be the second to last vertex of this list.

\paragraph{A side note on a proof by Mader.} In \cite{Mader1973,Mader1971b}, Mader presents a method to find a \emph{pendant pair} $\{v,w\}$ in a simple graph, which is a pair of vertices that is $\min\{d(v),d(w)\}$-connected, where $d$ is the degree function. 
He chooses an inclusion maximal clique and deletes vertices of this clique until it is not maximal anymore. When deleting vertices, new edges are added to preserve the degree of vertices that are not in the clique. Next the clique is enlarged to a maximal one; this procedure is iterated until every edge has an incident vertex in the clique. Then a pendent pair of the original graph can be found. It turns out that this method is a preliminary version of MAOs: The order in which the vertices are added to the clique is in fact a maximal adjacency ordering, and, for every $k$-connected pendent pair found by this method, there is a MAO whose forest decomposition contains this pair as the end vertices of an edge in $F_k$.  This seems to be widely unknown and we are only aware of one place in literature where this similarity is (briefly) mentioned~\cite[p.~443]{Mader1996}. Mader's existential proof can in fact be made algorithmic. However, MAOs provide a much nicer structure, as they work directly on the original graph.

\section{The Loose Ends Algorithm}
We first consider the slightly weaker problem of computing $k$ internally vertex-disjoint paths between $s$ and the root set $\{r_1,\ldots,r_k\}$. We will extend this to compute $k$ internally vertex-disjoint paths between two vertices in the next section.

\begin{lemma}\label{lem:looseends}
Algorithm~\ref{alg:looseEnds} computes $k$ $\overline{s}$-disjoint paths in $T_1 \cup \cdots \cup T_k$  from $s$ to $\{r_1,\ldots,r_k\}$ in time $O(|E(T_1 \cup \dots \cup T_k)|) \subseteq O(n+m)$. 
\end{lemma}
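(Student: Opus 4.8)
The plan is to build the $k$ paths simultaneously, processing the vertices of $T_1 \cup \cdots \cup T_k$ from right to left (i.e., in decreasing order of $<$), starting at $s$. At each stage I will maintain $k$ partial paths — the "loose ends" — each of which currently terminates at some vertex, and I will grow them by following left-edges. Concretely, initialize the $i$-th path as the single vertex $s$ for each $i \in \{1,\dots,k\}$; recall that $s$ has at least $k$ left-edges, one in each $F_i$. The key invariant I want to preserve is that after processing down to a vertex $v$, the current endpoints of the $k$ paths are pairwise distinct, each path lies in $T_1 \cup \cdots \cup T_k$, the paths are $\overline{s}$-disjoint, and — crucially — the set of current endpoints consists of $v$ together with vertices that are $\geq v$, with the endpoints "spread out" so that the path whose endpoint is largest can always be extended along a left-edge to something not yet used as an internal vertex. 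When the current endpoint of some path is a root $r_j$, that path is finalized.

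**The main loop and why it works.** At a generic step, let $v$ be the largest vertex that is currently an endpoint of some not-yet-finalized path, say path $P_a$. By Corollary~\ref{cor:leftedges} (and the discussion preceding it), $v$ has left-edges in $F_1,\dots,F_{\min(v\text{'s level},k)}$, with endpoints $\links_1(v) < \links_2(v) < \cdots < v$, all lying to the left of $v$ and $\geq r_i$ in the appropriate tree. The idea is to extend $P_a$ along the left-edge of $v$ in the forest $F_i$ of the \emph{largest} index $i$ that is still "available" — meaning $\links_i(v)$ is not already occupied as an endpoint or interior vertex of another path — thereby moving $P_a$'s endpoint from $v$ to $\links_i(v)$. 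Because $\links_1(v),\dots$ are strictly increasing in the forest index and Lemma~\ref{lem:interval} forces the vertex sets of the $T_i$ to be nested consecutive intervals, a counting/pigeonhole argument shows such an available index always exists: the vertices strictly between $\links_i(v)$ and $v$ that could block us are limited, and the $k$ paths, being $\overline{s}$-disjoint with distinct endpoints, cannot have used all candidate left-edge targets. Each extension strictly decreases the maximum endpoint (or finalizes a path at a root), so the process terminates; when it does, every path has reached a root, and since $r_1 < r_2 < \cdots < r_k$ are distinct, the $k$ paths end at the $k$ distinct roots.

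**The running time.** Each edge of $T_1 \cup \cdots \cup T_k$ is inspected a constant number of times: when we stand at its right endpoint $v$ and scan its (at most $k$) left-edges to pick an available one, and when it is actually added to a path. With the vertices pre-sorted by $<$ (free, since the MAO already gives this order) and an array marking which vertices are occupied, each step is $O(1)$ amortized against the edges of the trees touched. Summing gives $O(|E(T_1 \cup \cdots \cup T_k)|)$, which is $O(n+m)$ since the $F_i$ partition $E$. I would also note that $\overline{s}$-disjointness (rather than full vertex-disjointness) is exactly what the invariant delivers, since all $k$ paths start at the common vertex $s$ and the "occupied" marking is only consulted for vertices other than $s$.

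**Main obstacle.** The delicate point is proving that at every step the chosen left-edge target $\links_i(v)$ can be taken \emph{fresh} — that we never get stuck with all of $v$'s usable left-edges leading into already-used vertices. This is where Corollary~\ref{cor:leftedges} and Lemma~\ref{lem:interval} must be combined carefully: one has to argue about how many of the vertices in the interval $[r_j, v]$ can simultaneously be endpoints or interior vertices of the $k$ partial paths, and show this count is small enough that some $\links_i(v)$ with $i \le k$ escapes. Getting the invariant strong enough to push this induction through — in particular the precise claim about the "spread" of the current endpoints — is the crux of the argument; everything else is bookkeeping.
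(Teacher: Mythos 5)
There is a genuine gap, and you have located it yourself: the claim that at every step some $\links_i(v)$ is ``available'' is exactly the statement your argument needs, and you only assert it via an unspecified pigeonhole, admitting in your final paragraph that you do not know how to push the invariant through. The trouble is that for your append-only greedy rule the claim is not just unproven but genuinely in danger of failing. A vertex $v$ with $r_1\leq v<r_2$ has only one usable left-edge inside $T_1\cup\cdots\cup T_k$ (its $F_1$-edge; its left-edges in higher forests, if any, lie in components of those forests that do not contain $s$). Since your scheme imposes no per-path lower bound on where an endpoint may wander, nothing prevents two unfinished paths from both descending into such a region, at which point the path with the larger endpoint $v$ finds its unique usable target $\links_1(v)$ occupied by the other path's endpoint and is stuck; more generally, a vertex may have only $j<k$ usable left-edges while up to $k-1$ other endpoints lie below it, so no counting of the form you sketch can succeed without an additional structural invariant. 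Similarly, your termination claim that each path ends at a distinct root of $\{r_1,\dots,r_k\}$ is not established by anything in the argument.

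The paper avoids this problem by \emph{not} avoiding collisions greedily. It fixes for each path $A_i$ the invariant that its last edge lies in $T_i$ and its endpoint is at least $r_i$; then the only left-edge ever needed is the one in $F_i$, which is guaranteed to exist, and $A_i$ provably finishes exactly at $r_i$. When several paths do collide at the largest active vertex $v$, the algorithm repairs this retroactively: it \emph{replaces} the last vertex of each colliding path (except one) by $\links_{j_{i-1}}(\vorletzter(A_{j_i}))$, i.e.\ it reroutes the last edge through a lower-indexed forest from the second-to-last vertex (Corollary~\ref{cor:leftedges} guarantees this new endpoint is smaller than $v$), and then performs a cyclic downshift of the indices so that the ``last edge of $A_i$ is in $T_i$'' invariant is restored. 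This rewiring-plus-reindexing step is the missing idea in your proposal: it is what replaces the delicate occupancy counting you would otherwise have to do, and without it (or some equivalent mechanism that bounds how many endpoints can crowd into $[r_1,r_j)$) the induction cannot be closed. The initialization and running-time discussion in your write-up are fine, but they rest on the unproven core.
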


The outline of our algorithm is as follows. We initialize each $A_i$ to be the path that consists of the two vertices $s$ and $\links_i(s)$ (in that order). The vertices $\links_i(s)$ are marked as \emph{active}; throughout the algorithm, let a vertex be \emph{active} if it is an end vertex of an unfinished path $A_i$.

So far the $A_i$ are $\overline{s}$-disjoint. We aim for augmenting each $A_i$ to $r_i$. Step by step, for every active vertex $v$ from $s-1$ down to $r_1$ in $<$, we will modify the $A_i$ to longer paths, similar as in sweep line algorithms from computational geometry. The modification done at an active vertex $v$ is called a \emph{processing step}. From a high-level perspective, the end vertices of several paths $A_i$ may be replaced or augmented by new end vertices $w$ such that $r_i \leq w < v$ during the processing step of $v$. Such vertices $w$ are again marked as active, which results in a continuous modification of each $A_i$ to a longer path. By the above restriction on $w$, each path $A_i$ will have strictly decreasing vertices in $<$ throughout the algorithm. At the end of the processing step of $v$, we unmark $v$ from being active.

Let $v$ be the active vertex that is largest in $<$. Assume that $v$ is the end vertex of exactly one $A_i$. If $v = r_i$, $A_i$ is finished. Otherwise, we append the vertex $\links_i(v)$ to $A_i$ (see Algorithm~\ref{alg:looseEnds}). The important aspect of this approach is that the index of the path $A_i$ predetermines the vertex that augments $A_i$. Clearly, this way $A_i$ will reach $r_i$ at some point, according to Lemma~\ref{lem:interval}.

\begin{algorithm}[htb]
\caption{LooseEnds($G,<,s,k$)}\label{alg:looseEnds}
\begin{algorithmic}[1]
	\ForAll{$i$}\Comment{initialize all $A_i$}
		\State $A_i := (s,\links_i(s))$
		\State Mark $\links_i(s)$ as active
	\EndFor
	\While{there is a largest active vertex $v$}\Comment{process $v$}
		\State Let $j_1 < j_2 < \cdots < j_l$ be the indices of the paths $A_{j_i}$ that end at $v$\label{line:enda}
		\For{$i := 2$ to $l$}\Comment{replace end vertices}
			\State Replace $\letzter(A_{j_i})$ with $\links_{j_{i-1}}(\vorletzter(A_{j_i}))$\label{line:Repl}
			\State Mark $\links_{j_{i-1}}(\vorletzter(A_{j_i}))$ as active
		\EndFor
		\State Perform a cyclic downshift on $A_{j_1},\ldots,A_{j_l}$\Comment{$A_{j_i} := A_{j_{i+1}}$, $A_{j_l} := A_{j_1}$}\label{line:afterRepl}
		\If{$v = r_{j_l}$}
			\State $A_{j_l}$ is finished\Comment{$r_{j_l}$ is reached}\label{line:finish}
		\Else
			\State Append $\links_{j_l}(v)$ to $A_{j_l}$\Comment{append predetermined vertex}\label{line:predetRepl}
			\State Mark $\links_{j_l}(v)$ as active
		\EndIf
		\State Unmark $v$ from being active
	\EndWhile
	\State Output $A_1,\ldots,A_k$
\end{algorithmic}
\end{algorithm}

However, if at least two paths end at $v$, this approach does not ensure vertex-disjointness. Let $A_{j_1},\ldots,A_{j_l}$ these $l \geq 2$ paths and assume $j_1 < j_2 < \cdots < j_l$. We first replace the end vertex $v$ of $A_{j_i}$ with the vertex $\links_{j_{i-1}}(\vorletzter(A_{j_i}))$ for all $i \neq 1$. We will show that these modified end vertices are strictly smaller than $v$, which will re-establish the vertex-disjointness. The key idea of the algorithm is then to switch the indices of the $l$ paths appropriately such that the appended vertices are again predetermined by the path index.

Let a \emph{cyclic downshift} on $A_{j_1},\ldots,A_{j_l}$ replace the index of each path by the next smaller index of a path in this set (where the next smaller index of $j_1$ is $j_l$), i.e.\ we set $A_{j_i} := A_{j_{i+1}}$ for every $i \neq l$ and then replace $A_{j_l}$ with the old path $A_{j_1}$. We perform a cyclic downshift on $A_{j_1},\ldots,A_{j_l}$. Note that we did not alter the path $A_{j_l}$ (which was named $A_{j_1}$ before) yet. If $v = r_{j_l}$, $A_{j_l}$ is finished; otherwise, we append the vertex $\links_{j_l}(v)$ to $A_{j_l}$. See Algorithm~\ref{alg:looseEnds} for a description of the algorithm in pseudo-code. Figure~\ref{bsp:durchlauf} shows a run of Algorithm~\ref{alg:looseEnds}.

\begin{figure}[p]
   \captionsetup[subfloat]{justification=justified,singlelinecheck=false}
   \centering
      \subfloat[A MAO of a graph $G$ and its forests $F_1$ (green), $F_2$ (red, dashed) and $F_3$ (blue, dotted).]
      {\makebox[13cm]{\includegraphics[scale=0.55,page=1]{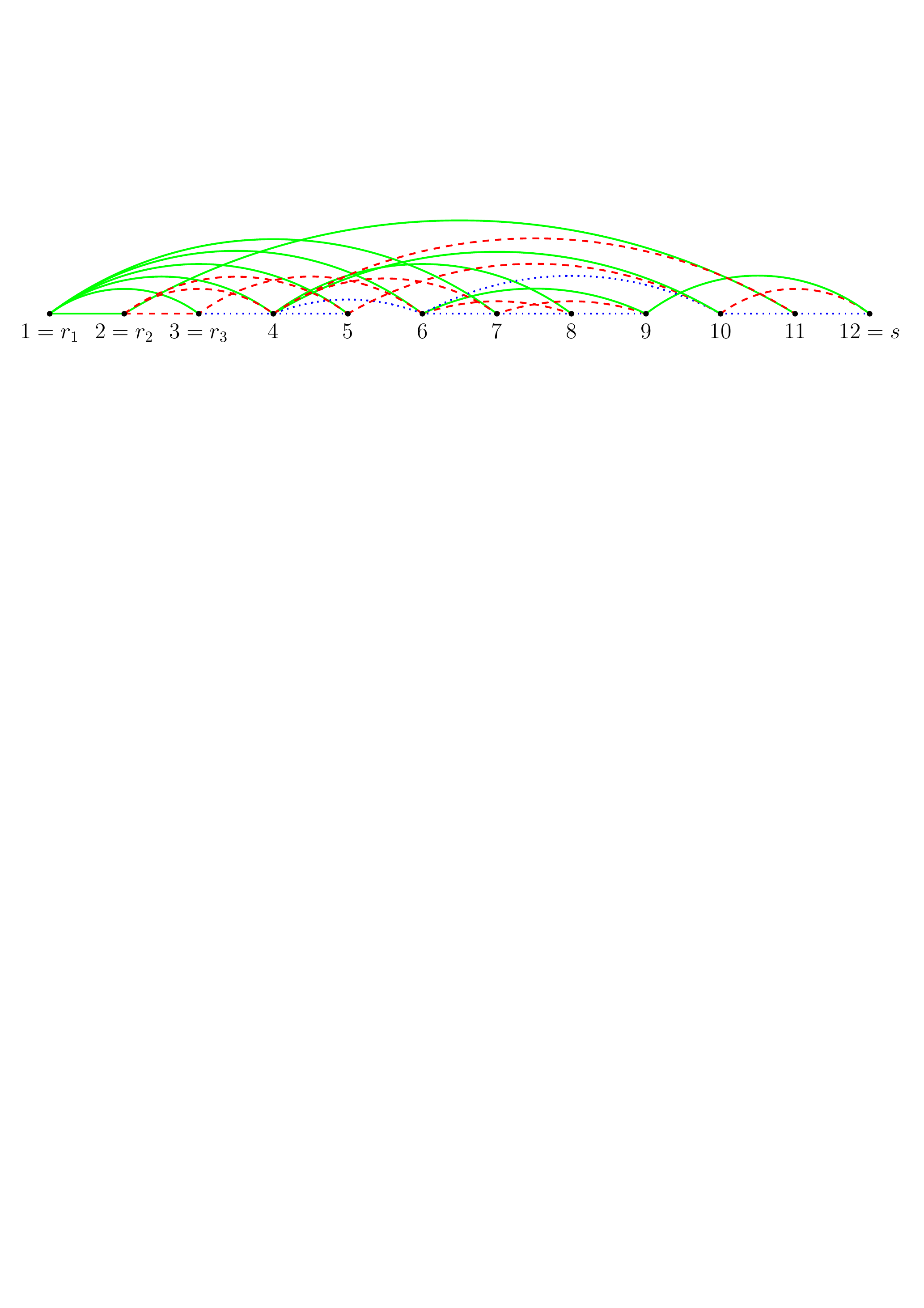}}}\\
      \subfloat[Paths $A_1$ (green), $A_2$ (red, dashed) and $A_3$ (blue, dotted) after the initialization phase and processing vertex $11$. The paths $A_2$ and $A_3$ end at the largest active vertex $10$.]
      {\makebox[13cm]{\includegraphics[scale=0.55,page=2]{beispieldurchlauf}}}\\
      \subfloat[After processing vertex $10$, the paths $A_2$ and $A_3$ have been shifted, which is here depicted by a color change. The last vertex of $A_2$ is then replaced, while $A_3$ is extended in $F_3$.]
      {\makebox[13cm]{\includegraphics[scale=0.55,page=3]{beispieldurchlauf}}}\\
      \subfloat[After processing $9$, the largest active vertex is $6$.]
      {\makebox[13cm]{\includegraphics[scale=0.55,page=4]{beispieldurchlauf}}}\\
      \subfloat[After shifting and extending $A_1$ and $A_3$, all three paths meet at the largest active vertex $4$.]
      {\makebox[13cm]{\includegraphics[scale=0.55,page=5]{beispieldurchlauf}}}\\
      \subfloat[Downshift: The old path $A_3$ is now $A_2$, the old $A_2$ is now $A_1$ and the old $A_1$ is now $A_3$.]
      {\makebox[13cm]{\includegraphics[scale=0.55,page=6]{beispieldurchlauf}}}\\
      \subfloat[After processing root $r_3=3$, $A_3$ is finished and $A_1$ and $A_2$ are shifted.]{\makebox[13cm]{\includegraphics[scale=0.55,page=7]{beispieldurchlauf}}}\\
      \subfloat[After processing the roots $r_2=2$ and $r_1=1$, the paths $A_2$ and $A_3$ are finished.]
      {\makebox[13cm]{\includegraphics[scale=0.55,page=8]{beispieldurchlauf}}}
   \caption{A run of Algorithm~\ref{alg:looseEnds} on the graph depicted in (a) when $s=12$ and $k=3$.}\label{bsp:durchlauf}
\end{figure}

We prove the correctness of Algorithm~\ref{alg:looseEnds}. Before the processing step of any active vertex $v$, the $A_i$ satisfy several invariants, the most crucial of which are that they are $\{v+1,\ldots,s-1\}$-disjoint and that the vertices of every $A_i$ are decreasing in $<$. In detail, we have the following invariants.

\begin{invariants}
Let $v < s$ be the largest active vertex, or $v := 0$ if there is no active vertex left. Before processing $v$, the following invariants are satisfied for every $1 \leq i \leq k$:
\begin{enumerate}
	\item[(1)] The vertices of $A_i$ start with $s$ and are strictly decreasing in $<$.
	\item[(2)] The path $A_i$ is finished if and only if $\letzter(A_i) > v$. In this case, $\letzter(A_i) = r_i$.\\
	If $A_i$ is not finished, $r_i\leq \letzter(A_i) \leq v$ and the last edge of $A_i$ is in $T_i$.
	\item[(3)] $\vorletzter(A_i) > v$ 
	\item[(4)] Every vertex $w \in A_i$ satisfying $v < w < s$ is not contained in any $A_j \neq A_i$.
	\item[(5)] $A_i \subseteq T_1 \cup \cdots \cup T_k$
\end{enumerate}
\end{invariants}

We first clarify the consequences. Invariant~(2) implies that the algorithm has finished all paths $A_i$ precisely after processing $r_1$, and that every $A_i$ ends at $r_i$. The Invariants~(1) and (3) are necessary to prove Invariant~(4), which in turn implies that the $A_i$ are $\{v+1,\ldots,s-1\}$-disjoint before processing an active vertex $v$. Hence, the final paths $A_i$ are $\overline{s}$-disjoint. With Invariant~(5) this gives the claim of Lemma~\ref{lem:looseends}.

It remains to prove Invariants~(1)--(5). Immediately after initializing the paths $A_1,\ldots,A_k$, the next active vertex is $\letzter(A_k) < s$. It is easy to see that all five invariants are satisfied for $v = \letzter(A_k)$, i.e.\ before processing the first active vertex. We will prove that processing any largest active vertex $v$ preserves all five invariants for the active vertex $v'$ that follows $v$ (where $v':=0$ if $v$ is only remaining active vertex). For this purpose, let $A'_i$ be the path with index $i$ immediately before processing $v'$ and let $A_i$ be the path with index $i$ before processing $v$; by hypothesis, the $A_i$ satisfy all invariants for $v$.

For Lines~\ref{line:Repl} and~\ref{line:predetRepl} in the processing step of $v$, we have to prove the existence of $\links_{j_{i-1}}(\vorletzter(A_{j_i}))$ and $\links_{j_l}(v)$ respectively. In Line~\ref{line:Repl}, we have $i \geq 2$ and $\letzter(A_{j_i}) = v$. Then Invariant~(2) implies that $A_{j_i}$ is not finished and $v = \letzter(A_{j_i}) = \links_{j_i}(\vorletzter(A_{j_i}))$. Thus, $\links_{j_{i-1}}(\vorletzter(A_{j_i}))$ exists. In Line~\ref{line:predetRepl}, we have $v\neq r_{j_l}$ and $\letzter(A_{j_l}) = v$ ($A_{j_l}$ before the cyclic downshift). Then Invariant~(2) implies that $r_{j_l} \leq v$. This proves $r_{j_l}<v$ and the existence of $\links_{j_l}(v)$.

We prove $v' < v$ next. Consider the vertices that are newly marked as active in the processing step of $v$. According to Line~\ref{line:enda} of Algorithm~\ref{alg:looseEnds}, every such vertex is the new end vertex of some path $A_{j_i}$ with end vertex $v$ that was modified in the processing step of $v$ (we do not count index transformations as modifications). There are exactly two cases how $A_{j_i}$ may have been modified, namely either by Line~\ref{line:Repl} (then $2 \leq i \leq l$ and $ \links_{j_{i-1}}(\vorletzter(A_{j_i}))$ is the vertex that is newly marked as active) or by Line~\ref{line:predetRepl} (then $\links_{j_l}(v)$ is the vertex that is newly marked as active); in particular, $A_{j_i}$ was not modified by both lines. In the first case, $A_{j_i}$ satisfies Invariant~(2) before the processing step of $v$ by hypothesis. In fact, we have $r_{j_i}\leq v$, as $v< r_{j_i}$ implies that $A_{j_i}$ is finished and since $\letzter(A_{j_i})>v$ would contradict $\letzter(A_{j_i}) = v$.

Hence, the last edge of $A_{j_i}$ is in $T_{j_i}$, which shows $v = \links_{j_i}(\vorletzter(A_{j_i}))$. Since $j_{i-1} < j_i$ by Line~\ref{line:enda} and due to Corollary~\ref{cor:leftedges}, we conclude $\links_{j_{i-1}}(\vorletzter(A_{j_i})) < v$. In the second case, Corollary~\ref{cor:leftedges} implies $\links_{j_l}(v) < v$. Thus, in both cases, every new active vertex is strictly smaller than $v$, which proves $v'<v$.

This gives Invariant~(1), as every $A'_{j_i}$ starts with $s$ and every new vertex is left of its predecessor in the path by Corollary~\ref{cor:leftedges}.

For Invariant~(2), consider the path $A'_i$ for any $i$. First, assume that $A'_i$ is finished. Then either $A_i$ is finished or $v=r_{i}$, according to Line~\ref{line:finish} of Algorithm~\ref{alg:looseEnds} in the processing step of $v$. In the former case, $A_i$ satisfies Invariant~(2) for $v$ and so does $A'_i$ for $v'<v$. In the latter case, we have $v'<v=r_{i}$ and $\letzter(A'_i) = \letzter(A_{j_1}) = v$.

Second, assume that $A'_i$ was not modified in the processing step of $v$ and is not finished. Then $\letzter(A'_i) < v$, as every path with end vertex at least $v$ is modified or finished in the processing step of $v$ or finished before. In particular, processing $v$ did not change the index of $A_i = A'_i$. As $A_i$ satisfies Invariant~(2) for $v$ by hypothesis, the only condition of Invariant~(2) that may be violated for $v'$ is $\letzter(A'_i) < v'$. However, as $\letzter(A'_i) < v$ was marked as active in some previous step of Algorithm~\ref{alg:looseEnds} and since $v'$ is the largest active vertex, $\letzter(A'_i) \leq v'$. Thus, $A'_i$ satisfies Invariant~(2) for $v'$.

Third, assume that $A'_{j_i}$ was modified in the processing step of $v$ and is not finished. Then $A'_{j_i}$ was modified either by Line~\ref{line:Repl} or~\ref{line:predetRepl}. If $A'_{j_i}$ was modified by Line~\ref{line:Repl}, we have $i < l > 1$ after the cyclic downshift, as the path $A_{j_1}$ is not modified by Line~\ref{line:Repl}. In addition, we know $\letzter(A'_{j_i}) = \links_{j_i}(\vorletzter(A_{j_{i+1}}))< \links_{j_{i+1}}(\vorletzter(A_{j_{i+1}})) =v$ by Corollary~\ref{cor:leftedges} and that the last edge of $A'_{j_i}$ is in $T_{j_i}$. Thus, $r_{j_i} \leq \letzter(A'_{j_i})$. If $A'_{j_i}$ was modified by Line~\ref{line:predetRepl}, we have $i=l$ and $r_{j_l} \leq \links_{j_l}(v) = \letzter(A'_{j_l})$ by Corollary~\ref{cor:leftedges}. Then the last edge of $A'_{j_l}$ is in $T_{j_l}$. In both cases, $\letzter(A'_{j_l})$ is active before processing $v'$ and it follows $\letzter(A'_{j_l}) \leq v'$.

For Invariant~(3), assume to the contrary that $\vorletzter(A'_i) \leq v'$. Since $v' < v < \vorletzter(A_j)$ for all $j\in \{1, \dots, k\}$, a new end vertex was appended to $A'_i$ in the processing step of $v$ (the end vertex was not replaced, as this would not have changed $\vorletzter(A'_i)$). This must have been done in Line~\ref{line:predetRepl} of Algorithm~\ref{alg:looseEnds} and we conclude $v' < v = \vorletzter(A'_i)$, which contradicts the assumption.

For Invariant~(4), consider Line~\ref{line:Repl} of the processing step of $v$. As showed in the proof of $v'<v$ above, we have $\links_{j_{i-1}}(\vorletzter(A_{j_i})) < v$ for all $1 < i \leq l$. Thus, Invariants~(1) and (3) imply that exactly the path $A'_{j_l}$ of the paths $A'_1,\ldots,A'_k$ contains $v$.

Invariant~(5) follows directly from the definition of $\links_i$. This concludes the proof of Lemma~\ref{lem:looseends}.

\begin{remark}
Invariant~(4) cannot be strengthened to $A_i \subseteq T_1 \cup \cdots \cup T_i$ or $A_i \subseteq T_i \cup \cdots \cup T_k$, as the cyclic downshifting may force $A_i$ to contain edges of $F_k$ or $F_1$: During the construction, many active vertices $v \in T_k$ may occur at which all $k$ paths (temporarily) end (see Figure~\ref{bsp:buntePfade}, for which each of the three paths $A_1, A_2, A_3$ contains edges of every tree $T_1, T_2, T_3$).

\begin{figure}[h!t]
   \captionsetup[subfloat]{justification=justified,singlelinecheck=false}
   \centering
      \subfloat[A MAO of a graph $H_1$ and its forests $F_1$ (green), $F_2$ (red, dashed) and $F_3$ (blue, dotted).]{\includegraphics[scale=0.65,page=1]{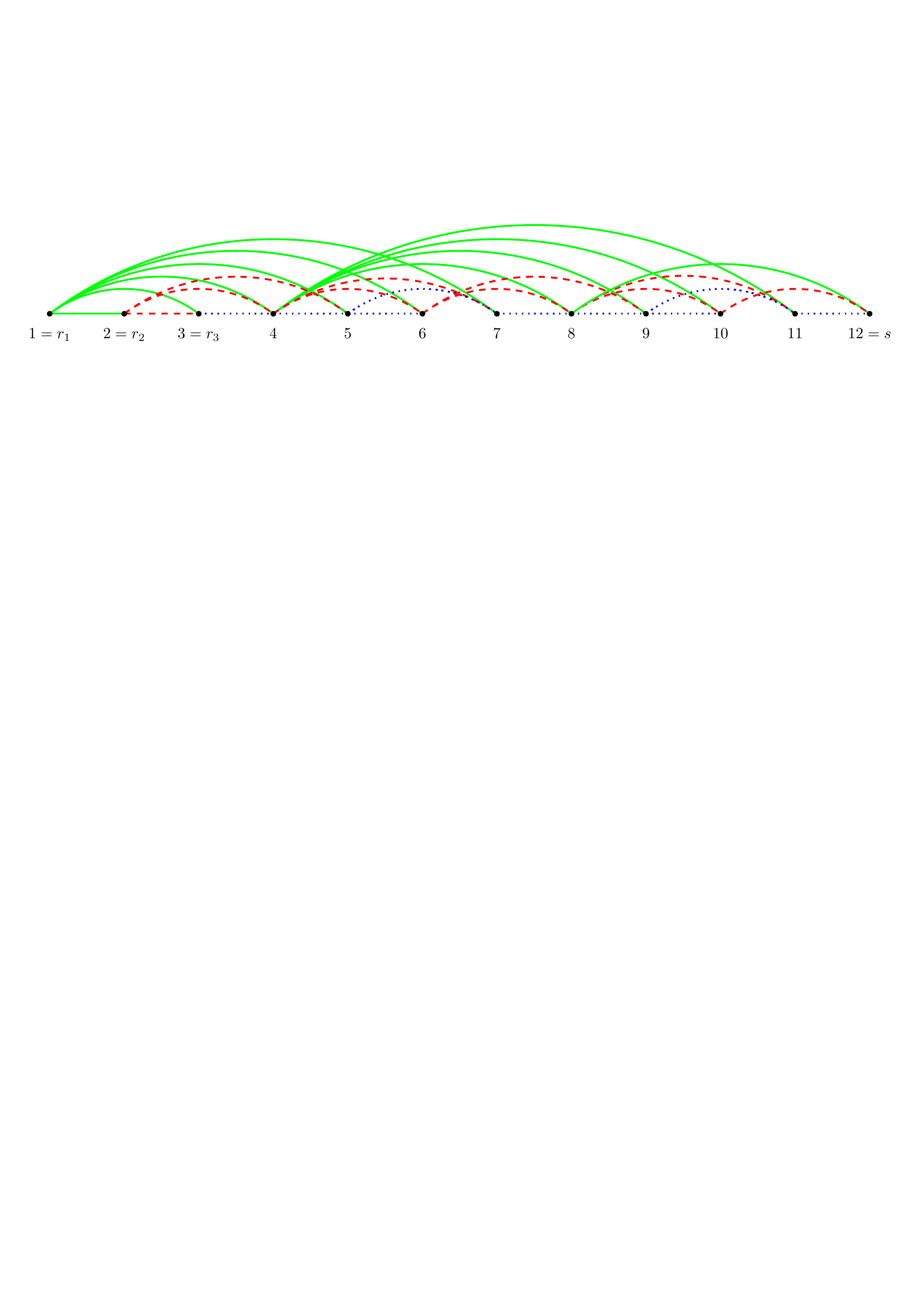}}\\
      \subfloat[The $\overline s$-disjoint paths $A_1$, $A_2$ and $A_3$ in $H_1$.]{\includegraphics[scale=0.65,page=2]{example01b-ipe}}
   \caption{The output of Algorithm~\ref{alg:looseEnds} on the graph depicted in (a) when $s=12$ and $k=3$.} \label{bsp:buntePfade}
\end{figure}
\end{remark}

So far we have shown an algorithmic proof for the existence of $k$ $\overline{s}$-disjoint paths from $s$ to the roots $r_1,\ldots,r_k$. It remains to show the running time for Lemma~\ref{lem:looseends}. Algorithm~\ref{alg:looseEnds} can be implemented in time $O(n+m)$, as it visits every edge only a constant number of times. Recall that the input graph $G$ does not have to be $k$-connected; it suffices that $G$ has minimum degree of at least $k$. A closer look at Algorithm~\ref{alg:looseEnds} shows that actually only the edges in $T_1 \cup \dots \cup T_k$ are visited, which gives the slightly better running time $O(|E(T_1 \cup \dots \cup T_k)|)$.

\section{Computing Vertex-Disjoint Paths Between Two Vertices}
We use the algorithm of the last section to prove our following main result.

\begin{theorem}\label{thm:main}
Let $t < s$ be a vertex in $T_k$. Then $k$ internally vertex-disjoint paths between $s$ and $t$ can be computed in time $O(|E(T_1 \cup \dots \cup T_k)|) \subseteq O(n+m)$.
\end{theorem}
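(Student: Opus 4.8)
The plan is to reduce the two-vertex problem to the root-set problem solved by Lemma~\ref{lem:looseends}. First I would run Algorithm~\ref{alg:looseEnds} with start vertex $s$ and parameter $k$ to obtain $k$ $\overline{s}$-disjoint paths $A_1,\ldots,A_k$ from $s$ to the roots $r_1,\ldots,r_k$. Since these paths are $\overline s$-disjoint, they pairwise share only the vertex $s$; in particular, the $k-1$ paths $A_1,\ldots,A_{k-1}$ restricted to their portions strictly after $s$ are genuinely internally disjoint, and each $A_i$ lands on its own root $r_i$.

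Next I would handle the ``far'' endpoint $t$. Recall $t < s$ lies in $T_k$, so by Lemma~\ref{lem:interval} the interval $\{r_k, r_k+1,\ldots, s\}$ is exactly $V(T_k)$ and the vertices $r_k, r_k+1,\ldots,t$ induce a connected subgraph of $T_k$; moreover every vertex $v$ in that range with $r_k < v$ has $r_i \le \links_i(v) < v$ for all $i \le k$ by Corollary~\ref{cor:leftedges}. The idea is to run a second, symmetric ``loose ends'' construction rooted at $t$: build $k$ paths $B_1,\ldots,B_k$ that start at $t$, move \emph{leftwards} through $T_1,\ldots,T_k$, and are $\overline t$-disjoint. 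This works because $t$ has at least $k$ left-edges (it is a non-root vertex of $T_k$, using $t > r_k$; the boundary case $t = r_k$ is handled separately since then $t$ is itself a root and we can route directly), and the same interval/root structure that made Algorithm~\ref{alg:looseEnds} correct applies verbatim to the subinstance on vertices $\le t$. The $B_i$ terminate at the roots $r_1,\ldots,r_k$ as well — here I should be slightly careful: the roots of the components of $F_i$ containing $t$ need not coincide with the $r_i$ containing $s$, but since $t \in T_k \subseteq F_k$ and all relevant trees are nested intervals sharing the left endpoints $r_1 < r_2 < \cdots < r_k$, the component of $F_i$ containing $t$ has root exactly $r_i$ for each $i \le k$. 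So both families land on the common set $\{r_1,\ldots,r_k\}$.

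Then I would glue: for each $i$, concatenate $A_i$ (from $s$ down to $r_i$) with the reverse of $B_i$ (from $r_i$ up to $t$) to get a walk from $s$ to $t$. To extract a genuine path, shortcut at the first vertex where $A_i$ meets $B_i$ — this is well-defined since both contain $r_i$. Call the resulting $s$--$t$ path $P_i$. The $A_i$ are $\overline s$-disjoint and the $B_i$ are $\overline t$-disjoint, but a priori an $A_i$ could collide with a $B_j$ for $i \ne j$ at some interior vertex; the clean fix is to note that $A_i$ and $B_i$ together use only vertices in $T_1 \cup \cdots \cup T_k$ and to reroute collisions by the standard path-swapping argument: whenever two of the constructed $s$--$t$ paths cross, swap their tails past the crossing, which cannot increase the total number of interior-vertex incidences, so after finitely many swaps the $k$ paths are internally disjoint. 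Since $s$ and $t$ are the common endpoints of all $k$ paths and every path is nontrivial, internal disjointness of the final family is exactly what Menger demands.

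The main obstacle I anticipate is the collision analysis between the two families $\{A_i\}$ and $\{B_i\}$: each family is internally disjoint on its own by Lemma~\ref{lem:looseends}, but the interleaving is not automatic, and one must argue that the swapping terminates and preserves the endpoint set — i.e.\ that a swap never merges two paths into one or strands a root. A cleaner alternative, which I would actually pursue if it goes through, avoids the second family entirely: observe that $t$ itself becomes ``active'' (or lies on some $A_i$) during the run of Algorithm~\ref{alg:looseEnds}, so one can instead run the algorithm, record the step at which each path first reaches a vertex $\le t$, and truncate/redirect the $A_i$ at $t$ directly using the predetermined left-edges $\links_i(\cdot)$ — this keeps everything inside a single sweep and inherits the invariants (1)--(5) wholesale, giving internal disjointness for free and the running time $O(|E(T_1 \cup \cdots \cup T_k)|)$ immediately. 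The bound $O(n+m)$ then follows as in Lemma~\ref{lem:looseends} since the forests $F_i$ partition $E$.
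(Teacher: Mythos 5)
There is a genuine gap, and it is precisely the one the paper's Figures~\ref{fig:matchingstattloose} and~\ref{fig:unpassendeenden} are designed to expose. Your main route runs Algorithm~\ref{alg:looseEnds} twice, once from $s$ and once from $t$, glues $A_i$ to $B_i$ at the common roots, and then hopes to repair collisions between $A_i$ and $B_j$ ($i\neq j$) by ``path swapping.'' This cannot work in general: in the graph $H_2$ of Figure~\ref{fig:matchingstattloose} (with $s=7$, $t=6$, $k=3$), both outputs contain the interior vertex $5$, and in the \emph{union} of the two outputs the vertices $s$ and $t$ are separated by the two-vertex set $\{3,5\}$. So the union does not even contain three internally disjoint $s$--$t$ paths, and no amount of tail-swapping or rerouting inside that union can produce them. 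Moreover, swapping tails of two $s$--$t$ paths at a shared interior vertex does not remove the shared vertex (both resulting paths still pass through it); extracting disjoint paths from a non-disjoint collection is in general a flow computation, which is exactly what the theorem is trying to avoid. Your side remark that one could instead ``finish'' an $A_i$ and a $B_j$ whenever they meet also fails when $i\neq j$: Figure~\ref{fig:unpassendeenden} shows a path $B_{j'}$ can then be stranded at a root $r_{j'}>r_{i'}$ with no matching partner, because Invariant~(2) ties the reachable root to the path's index.

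Your ``cleaner alternative'' does not close this gap either. At most one of the $A_i$ can pass through $t$ (by $\overline{s}$-disjointness), so ``truncate/redirect the $A_i$ at $t$ using the left-edges of $t$'' is undefined for the other $k-1$ paths: they approach the interval left of $t$ without ever touching $t$, and the $k$ left-edges of $t$ lead \emph{away} from $s$, not toward the incoming $A$-paths. The real content of the paper's proof is a second algorithm (Algorithm~\ref{alg:matchingEnds}) that, once the sweep passes $t$, grows a second family $B_1,\dots,B_k$ from $t$ \emph{in parallel} with the $A$-paths and, whenever an $A$-path and a $B$-path end at the same active vertex, chooses $j=\max(I_A\cup I_B)$ and performs coordinated cyclic downshifts in both families so that exactly the pair $A_j,B_j$ with \emph{equal} index meets and is fused; this index coordination is what the modified Invariants~(1)--(5) (with the extra clauses about $B_i$ and about $w\in A_i\cap B_i$) are needed for, and it is not something the invariants of Lemma~\ref{lem:looseends} give you ``wholesale.'' Without an argument replacing that coordination mechanism, both of your routes stall exactly at the step you yourself flag as the anticipated obstacle.
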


This theorem is directly implied by the following lemma.

\begin{lemma}\label{lem:paths}
Let $t < s$ be a vertex in $T_k$. Then there are $k$ paths $A_1,\dots,A_k$ with start vertex $s$ and $k$ paths $B_1,\dots,B_k$ with start vertex $t$ such that $\letzter(A_i) = \letzter(B_i)$ for every $i$ and $\{A_1 \cup B_1,\dots,A_k \cup B_k\}$ is a set of $k$ internally vertex disjoint paths from $s$ to $t$. Moreover, all paths are contained in $T_1 \cup \cdots \cup T_k$ and can be computed in time $O(|E(T_1 \cup \dots \cup T_k)|)$.
\end{lemma}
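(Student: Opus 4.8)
The idea is to run Algorithm~\ref{alg:looseEnds} twice, once from $s$ and once from $t$, and then splice the resulting path systems together. First I would invoke Lemma~\ref{lem:looseends} with start vertex $s$ to obtain $k$ $\overline{s}$-disjoint paths $A_1,\dots,A_k$ in $T_1\cup\cdots\cup T_k$ from $s$ to the roots $r_1,\dots,r_k$, where $A_i$ ends at $r_i$. Since $t<s$ lies in $T_k$, Lemma~\ref{lem:interval} guarantees $t$ has at least $k$ left-edges (it is in every $T_i$ for $i\le k$ unless $t$ is a root $r_i$; more precisely $t\ge r_k\ge r_i$, and if $t=r_i$ then $t$ has exactly $i-1$ left-edges, but $t\in T_k$ forces $t\ge r_k$ so the only problematic case is $t=r_k$, which has $k-1$ left-edges). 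To handle this boundary case cleanly I would note that when $t>r_k$, $t$ has $\ge k$ left-edges and one can run Algorithm~\ref{alg:looseEnds} with $s:=t$ and the same $k$ to get $k$ $\overline{t}$-disjoint paths $B_1,\dots,B_k$ from $t$ to $r_1,\dots,r_k$ inside $T_1^{(t)}\cup\cdots\cup T_k^{(t)}$; and since the trees $T_i$ restricted to vertices $\le t$ are exactly the trees the second run would build (by Lemma~\ref{lem:interval}, $V(T_i)$ is an interval $r_i,\dots$ containing everything up to $s$, hence also up to $t$, and the $\links_i$ function is the same), the second run lives inside $T_1\cup\cdots\cup T_k$ as well. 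The case $t=r_k$ needs a small separate argument (e.g.\ use only $k-1$ of the paths plus the direct edge, or observe $t$ still reaches all of $r_1,\dots,r_{k-1}$ and handle index $k$ by hand).

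Next, since $A_i$ and $B_i$ both end at $r_i$, the concatenation $A_i\cup B_i$ is a walk from $s$ to $t$ through $r_i$; after removing any cycles it is an $s$--$t$ path. The key step is to show $\{A_1\cup B_1,\dots,A_k\cup B_k\}$ is internally vertex-disjoint. The $A_i$ are pairwise $\overline{s}$-disjoint and the $B_i$ are pairwise $\overline{t}$-disjoint, so the only possible collisions are between some $A_i$ and some $B_j$ with $i\ne j$. Here I would exploit the structure from Invariant~(4) and Corollary~\ref{cor:leftedges}: all vertices on $A_i$ and $B_j$ lie in the interval $[r_1,s]$, and more can be said about which interval of roots each sits above. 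The cleanest route is probably to re-examine the invariants of the two runs jointly; in particular, each $A_i$ is strictly decreasing from $s$ and each $B_j$ strictly decreasing from $t<s$, and one wants a "laminar" or "interval" argument showing that $A_i$ and $B_j$ can only meet at a vertex that is forced to be an internal cancellation point. Alternatively, and perhaps more robustly, I would not try to keep the raw concatenations disjoint but instead apply a standard rerouting/uncrossing argument: given the at most $2k$ paths, whenever $A_i$ and $B_j$ first share an internal vertex $x$, swap tails to reroute, and argue by a potential function (total length, or lexicographic order of meeting vertices) that this terminates with $k$ genuinely internally disjoint $s$--$t$ paths. Because both path families already connect to the \emph{same} target set $\{r_1,\dots,r_k\}$ with a natural index-matching, the swaps preserve the matching $\letzter(A_i)=\letzter(B_i)$ up to a relabelling, which is all the lemma asks.

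The running time is immediate: each run of Algorithm~\ref{alg:looseEnds} is $O(|E(T_1\cup\cdots\cup T_k)|)$ by Lemma~\ref{lem:looseends}, the concatenation and cycle-removal is linear in the total length of the paths, which is $O(|E(T_1\cup\cdots\cup T_k)|)$ since every path is decreasing in $<$ and lies in $T_1\cup\cdots\cup T_k$, and any uncrossing step only decreases a monovariant bounded by that same quantity; finally $O(|E(T_1\cup\cdots\cup T_k)|)\subseteq O(n+m)$. Theorem~\ref{thm:main} then follows by taking the union paths $A_i\cup B_i$.

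\textbf{Main obstacle.} The delicate point is the disjointness of $A_i$ against $B_j$ for $i\ne j$: the $\overline{s}$- and $\overline{t}$-disjointness guarantees coming out of Lemma~\ref{lem:looseends} are each internal to one run, and there is no a priori reason the two runs avoid each other. I expect the bulk of the work to be either (i) strengthening the invariant analysis so that the vertices of $A_i$ and $B_j$ occupy disjoint or nested ranges determined by the roots $r_1<\cdots<r_k$, forcing any shared vertex to be a root at which a cycle gets cancelled anyway, or (ii) setting up a clean rerouting argument with a terminating monovariant. The boundary case $t=r_k$ (and, symmetrically, degenerate situations where $t$ coincides with some $r_i$) is a minor but necessary technicality to dispatch separately.
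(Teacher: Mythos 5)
There is a genuine gap, and it is precisely the one the paper is built around. Your core plan --- run Algorithm~\ref{alg:looseEnds} once from $s$ and once from $t$, then splice the two outputs, repairing collisions between $A_i$ and $B_j$ ($i\neq j$) by an uncrossing/rerouting argument --- is the ``first idea'' that the paper refutes with an explicit counterexample: for the graph $H_2$ of Figure~\ref{fig:matchingstattloose} (with $s=7$, $t=6$, $k=3$), the two independent runs both use the vertex $5\notin\{s,t,r_1,r_2,r_3\}$, and in the \emph{union} of the two outputs $s$ and $t$ are separated by the two vertices $\{3,5\}$. This kills not only the naive concatenation but also your fallback (ii): any rerouting/tail-swapping scheme is confined to the edges of the $2k$ computed paths, and if that union subgraph has a $(k-1)$-vertex $s$--$t$ separator, no monovariant argument can extract $k$ internally disjoint $s$--$t$ paths from it. Your alternative (i), strengthening the invariants of two \emph{independent} runs so that $A_i$ and $B_j$ occupy disjoint or nested intervals, cannot work for the same reason: the two runs genuinely collide on legitimate non-root internal vertices. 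Even the weaker hope that collisions can be absorbed ``up to relabelling'' when an $A$-path and a $B$-path meet is refuted by the paper's second counterexample (Figure~\ref{fig:unpassendeenden}, graph $H_3$): finishing $A_i$ and $B_j$ with $i\neq j$ at a common vertex can strand a leftover $B$-path at a root it cannot legally leave under Invariant~(2).

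The missing idea is that the two constructions must be \emph{interleaved and coordinated}, not run independently. The paper's Algorithm~\ref{alg:matchingEnds} sweeps the $A$-paths down until the largest active vertex is at most $t$, then grows the $B$-paths from $t$ in the same sweep; at each active vertex $v$ at which some $A$-path or $B$-path ends, it chooses $j:=\max(I_A\cup I_B)$ and performs coordinated replacements and cyclic downshifts on \emph{both} families so that afterwards only the index-matched pair $A_j,B_j$ can end at $v$, and a pair is finished only when $\letzter(A_j)=\letzter(B_j)=v$. The correctness then rests on a joint set of invariants (notably the strengthened disjointness Invariant~(4) covering $A$-against-$B$ intersections), proved by induction over the sweep --- not on post hoc repair of two finished path systems. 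Your boundary-case worry about $t=r_k$ is real but minor and is handled uniformly by those invariants (when $v=t$, Invariant~(2) forces $\letzter(A_i)<t$ whenever $\links_i(t)$ is needed, i.e.\ $t>r_i$); the substantive defect of the proposal is the reliance on two independent runs plus local rerouting, which the paper shows cannot succeed.
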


A first idea would be to use the loose ends-algorithm twice, once for the start vertex $s$ and once for the start vertex $t$, in order to find the paths $A_i$ and $B_i$ for all $i$. However, in general this is bound to fail, as Figure~\ref{fig:matchingstattloose} shows. A second attempt may try to finish two paths $A_i$ and $B_j$ whenever they end at the same active vertex. However, this may fail when $i \neq j$, as then two single paths $A_{i'}$ and $B_{j'}$ may remain that end at the respective roots $r_{i'}$ and $r_{j'} > r_{i'}$ such that $B_{j'}$ cannot be extended to $r_{i'}$ without violating the index scheme of Invariant~(2), as Figure~\ref{fig:unpassendeenden} shows.

\begin{figure}[!ht]
   \captionsetup[subfloat]{justification=justified,singlelinecheck=false}
   \centering
      \subfloat[A MAO of a graph $H_2$ and its forests $F_1$ (green), $F_2$ (red, dashed) and $F_3$ (blue, dotted).]
      {\makebox[13cm]{
      \includegraphics[scale=0.75,page=1]{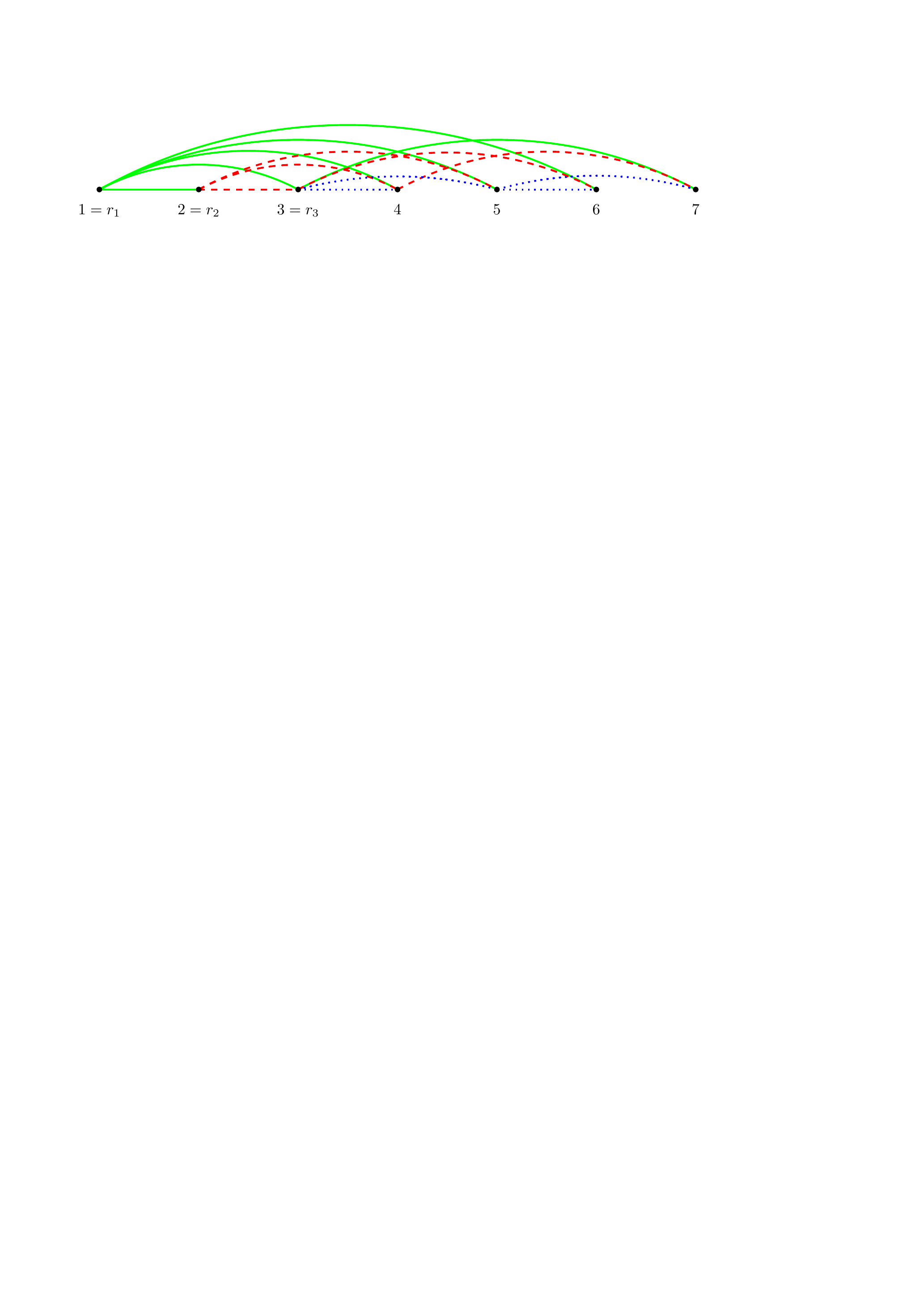}}}\\
      \subfloat[The output of Algorithm~\ref{alg:looseEnds} on $H_2$ with start vertex $s=7$ and $k=3$ (with a cyclic downshift at 3). Note that each final path $A_i$ ends at $r_i$ due to Invariant~(2). The final $A_1$ contains $5$.]
      {\makebox[13cm]{
      \includegraphics[scale=0.75,page=2]{example02b-ipe}}}\\
      \subfloat[The output of Algorithm~\ref{alg:looseEnds} on $H_2$ with start vertex $t=6$ and $k=3$. Since $B_2$ contains 5, both outputs contain the vertex $5 \notin \{s,t,r_1,r_2,r_3\}$, preventing the disjointess of their union.]
      {\makebox[13cm]{
      \includegraphics[scale=0.75,page=3]{example02b-ipe}}}\\
      \subfloat[The union of the two outputs. The vertices $s=7$ and $t=6$ are separated by $\{3,5\}$.]
      {\makebox[13cm]{
      \includegraphics[scale=0.75,page=4]{example02b-ipe}}}\\
   \caption{A graph $H_2$ for which applying Algorithm~\ref{alg:looseEnds} twice does not give internally vertex-disjoint paths.}\label{fig:matchingstattloose}
\end{figure}

\begin{figure}[!ht]
   \captionsetup[subfloat]{justification=justified,singlelinecheck=false}
   \centering
      \subfloat[A MAO of a graph $H_3$ and its forests $F_1$ (green), $F_2$ (red, dashed) and $F_3$ (blue, dotted).]
      {\makebox[13.5cm]{
      \includegraphics[scale=0.75,page=1]{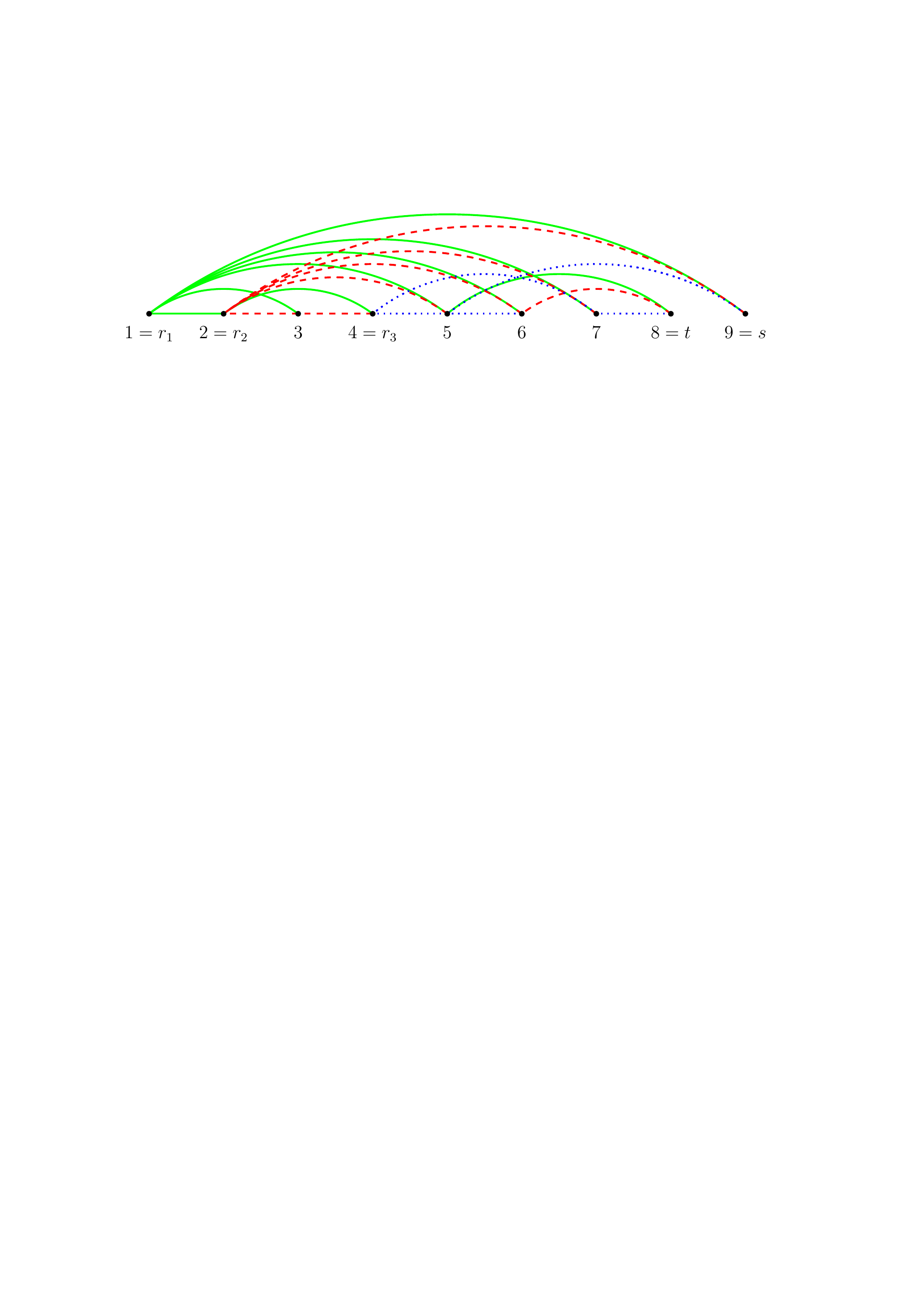}}}\\
      \subfloat[The output of Algorithm~\ref{alg:looseEnds} on $H_3$, $k=3$, with start vertices $s=9$ and $t=8$, respectively, such that the path $A_3$ from $s$ and the path $B_1$ from $t$ are finished at vertex 5. Note that the path $B_3$ ends at root $r_3=4$ without hitting a matching path from $s$.]
      {\makebox[13.5cm]{
      \includegraphics[scale=0.75,page=2]{example03-ipe}}}\\
   \caption{A graph $H_3$ for which applying Algorithm~\ref{alg:looseEnds} twice does not give internally vertex-disjoint paths if paths with the same end vertex are finished.}\label{fig:unpassendeenden}
\end{figure}

We will nevertheless use Algorithm~\ref{alg:looseEnds} to prove Lemma~\ref{lem:paths}, but in a more subtle way, as outlined next. First, we compute the paths $A_1,\dots,A_k$ with start vertex $s$ using Algorithm~\ref{alg:looseEnds}, until the largest active vertex $v$ is less or equal $t$ (i.e.\ the parts of the $A_i$ between $s$ and $t$ are just computed by Algorithm~\ref{alg:looseEnds}). As soon as $v\leq t$, we additionally construct a second set of paths $B_1,\dots,B_k$ with start vertex $t$ using Algorithm~\ref{alg:looseEnds}.

The main difference to Algorithm~\ref{alg:looseEnds} from this point on is that we extend the paths $A_i$ and the paths $B_i$ in parallel (i.e.\ we take the largest active vertex of both running constructions) such that, after the processing step of $v$, the vertex $v$ is not contained in any two paths $A_i$ and $B_j$ with $i \neq j$. This ensures the vertex-disjointness.

If no $A$-path or no $B$-path ends at $v$, we again just perform Algorithm~\ref{alg:looseEnds}; then at most one path contains $v$ after the processing step. Otherwise, some $A$-path and some $B$-path ends at $v$. After the processing step at $v$, we want to have exactly two paths $A_j$ and $B_j$ (i.e.\ having the same index) that end at $v$; such a pair of paths is then finished. In order to ensure this, we choose $j$ as the largest index such that $A_j$ \emph{or} $B_j$ ends at $v$ before processing $v$. If both $A_j$ and $B_j$ end at $v$, we perform one processing step of Algorithm~\ref{alg:looseEnds} at $v$ for the $A$-paths and the $B$-paths, respectively, which implies that no other path is ending at $v$.

Otherwise, exactly one of the paths $A_j$ and $B_j$ ends at $v$, say $A_j$. Then $B_j$ is not finished, as we finish only paths having the same index, and the last edge of $B_j$ is in $F_j$. Hence, there is an index $i<j$ such that $B_i$ ends at $v$. We then apply a processing step of Algorithm~\ref{alg:looseEnds} (including a cyclic downshift) on $B_j$ and all $B$-paths that end at $v$, and one on all $A$-paths, respectively. Then the new paths $A_j$ and $B_j$ (due to cyclic downshifts, these correspond to the former $A$- and $B$-paths with lowest index ending at $v$) end at $v$ afterward, but no other $A$- or $B$-path, as desired. Note that the replacement of the last edge of (the old) $B_j$, which did not end at $v$ but, say, at a vertex $w$, may cause $w$ to be active although neither an $A$-path nor a $B$-path ends at $w$.

\begin{algorithm}[p]
\caption{MatchingEnds($G,<,s,t,k$)\Comment{$t$ is a vertex in $T_k$, $t<s$}}\label{alg:matchingEnds}
\begin{algorithmic}[1]
	\ForAll{$i$}\Comment{initialize all $A_i$ and $B_i$}
		\State $A_i := (s,\links_i(s))$
		\State Mark $\links_i(s)$ as active
		\State $B_i := (t)$
	\EndFor
	\State Mark $t$ as active
	\While{there is a largest active vertex $v$}\Comment{process $v$} \label{line:while}
	\If {v=t}
		\ForAll{$i$}\Comment{initialize all $A_i$}
			\If {$\letzter(A_i)= t$}
				\State $A_{i}$, $B_i$ are finished \label{line:finish1}
			\Else
				\State Append $\links_i(t)$ to $B_i$ \label{line:predetB}
				\State Mark $\links_i(t)$ as active \label{line:active1}
			\EndIf
		\EndFor
		\State Unmark $t$ from being active
	\Else
		\State $I_A:= \{i \vert \letzter(A_i) = v\}$\label{line:end-A}
		\State $I_B:= \{i \vert \letzter(B_i) = v\}$\label{line:end-B}
		\If {$I_A$ and $I_B$ are empty}
			\State Unmark $v$ from being active and go to Line \ref{line:while} \label{line:cancel}
		\EndIf
		\State $j := \max(I_A\cup I_B)$ \label{line:def-j}
		\ForAll{pairs $(i_1, i_2)$ of consecutive	indices $i_1<i_2$ in $I_A \cup \{j\}$}\Comment{replace ends}
			\State Replace $\letzter(A_{i_2})$ with $\links_{i_1}(\vorletzter(A_{i_2}))$\label{line:ReplA}
			\State Mark $\links_{i_1}(\vorletzter(A_{i_2}))$ as active
		\EndFor
		\ForAll{pairs $(i_1, i_2)$ of consecutive	indices $i_1<i_2$ in $I_B \cup \{j\}$}\Comment{replace ends}
			\State Replace $\letzter(B_{i_2})$ with $\links_{i_1}(\vorletzter(B_{i_2}))$\label{line:ReplB}
			\State Mark $\links_{i_1}(\vorletzter(B_{i_2}))$ as active
		\EndFor
		\State Perform a cyclic downshift on all $A_i$ with $i\in I_A \cup {j}$\label{line:DownshiftA}
		\State Perform a cyclic downshift on all $B_i$ with $i\in I_B \cup {j}$\label{line:DownshiftB}
		\If{$v = \letzter(A_j) = \letzter (B_j)$} \Comment{if and only if $I_A \neq \emptyset \neq I_B$}\label{line:bed-finish2}
			\State $A_{j}$, $B_j$ are finished \label{line:finish2}
		\ElsIf {$v=\letzter(A_j)$}
			\State Append $\links_{j}(v)$ to $A_{j}$\Comment{append predetermined vertex}\label{line:predetReplA}
			\State Mark $\links_{j}(v)$ as active
		\ElsIf {$v=\letzter(B_j)$}
			\State Append $\links_{j}(v)$ to $B_{j}$\Comment{append predetermined vertex}\label{line:predetReplB}
			\State Mark $\links_{j}(v)$ as active
		\EndIf
		\State Unmark $v$ from being active
	\EndIf
	\EndWhile
	\State Output $A_1, \dots, A_k$, $B_1, \dots, B_k$
\end{algorithmic}
\end{algorithm}

For a precise description of the approach, see Algorithm~\ref{alg:matchingEnds}. We now show that Algorithm~\ref{alg:matchingEnds} outputs the desired paths and thus proves Lemma~\ref{lem:paths}. The following observations follow directly from Algorithm~\ref{alg:matchingEnds}.

\begin{observation}\label{obs:1}
Throughout Algorithm~\ref{alg:matchingEnds} the paths $A_1,\dots,A_k,B_1,\dots,B_k$ satisfy the following properties.
\begin{itemize}
	\item[(1)] For every $i\in \{1, \dots, k\}$, $A_i$ and $B_i$ are both finished or both unfinished.
	\item[(2)] As long as the largest active vertex is larger than $t$, $B_1 = B_2 = \dots = B_k = (t)$.
	\item[(3)] The end vertex of every unfinished path is active.
\end{itemize}
\end{observation}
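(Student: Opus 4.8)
All three properties are invariants maintained through the main loop of Algorithm~\ref{alg:matchingEnds}, so the plan is a single induction on the number of completed processing steps, where for each property we only need to inspect the lines that can affect the quantity in question. The base case is the state right after the initialization block: every $A_i=(s,\links_i(s))$ is unfinished with active end $\links_i(s)$, every $B_i=(t)$ is unfinished with active end $t$, and (1)--(3) clearly hold. For the inductive step of (3), the crucial observation is that the only vertex ever \emph{unmarked} during a processing step is the processed vertex itself ($v$, or $t$ in the $v=t$ branch). Hence it suffices to check that in the processing step of a vertex, every \emph{unfinished} path whose end equals that vertex is either declared finished or receives a new end vertex that is marked active in the same step, while no other path loses the activeness of its end. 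This is a short case distinction: in the $v=t$ branch each $A_i$ with $\letzter(A_i)=t$ (and hence each such $B_i$) is finished, and every other $B_i$ is extended by $\links_i(t)$, which is marked active; in the $v\neq t$ branch, within each of $I_A\cup\{j\}$ and $I_B\cup\{j\}$ all paths except the one of smallest index have their end replaced by a vertex that is marked active (Lines~\ref{line:ReplA},~\ref{line:ReplB}), and after the cyclic downshift the smallest-index path becomes $A_j$, resp.\ $B_j$, which is then either finished (Line~\ref{line:finish2}) or extended by $\links_j(v)$ and marked active (Lines~\ref{line:predetReplA},~\ref{line:predetReplB}). Cyclic downshifts change only indices, not end vertices, so no unfinished path outside these sets is affected; the fact that a newly marked vertex need not be the end of any path is exactly the phenomenon already noted before Algorithm~\ref{alg:matchingEnds}. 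Property (2) is then immediate: as long as the largest active vertex $v$ satisfies $v>t$, only the $v\neq t$ branch is executed, and by the induction hypothesis $B_1=\dots=B_k=(t)$ we get $I_B=\{\,i:\letzter(B_i)=v\,\}=\emptyset$ (since $t<v$), so $j=\max(I_A\cup I_B)\in I_A$, the loop over consecutive pairs of $I_B\cup\{j\}=\{j\}$ is empty, the downshift on $\{j\}$ is the identity, and neither branch guarded by $\letzter(B_j)=v$ is taken; hence no $B$-path changes until the largest active vertex drops to $t$.

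For (1), a path $A_i$ (resp.\ $B_i$) is declared finished only in Line~\ref{line:finish1} or Line~\ref{line:finish2}, and in both lines the $A$- and the $B$-path \emph{with the same current index} are finished together; no line ever un-finishes a path. The only remaining way the equivalence ``$A_i$ finished $\iff$ $B_i$ finished'' could be broken is through the two cyclic downshifts, which relabel $\{A_i:i\in I_A\cup\{j\}\}$ and $\{B_i:i\in I_B\cup\{j\}\}$ along \emph{different} index sets. I would rule this out via the auxiliary claim that at the start of every processing step each finished path has end vertex strictly larger than the processed vertex; then a finished $A$- or $B$-path never lies in $I_A$, in $I_B$, or equals $j$, so the downshifts permute only unfinished paths and leave the sets of finished $A$-indices and finished $B$-indices unchanged, which completes the induction. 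The auxiliary claim itself follows from two facts: a path is finished in Line~\ref{line:finish1}/\ref{line:finish2} precisely when its end equals the vertex currently being processed (which is then unmarked), and the processed vertex is non-increasing over the run of the algorithm.

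\textbf{Main obstacle.} The delicate point is this last monotonicity, i.e.\ that every vertex newly marked active during a processing step is strictly smaller than the vertex processed in that step. For the appended vertices $\links_j(v)$ this is immediate (it is a left-edge of $v$), but for the replacement vertices $\links_{i_1}(\vorletzter(A_{i_2}))$ and $\links_{i_1}(\vorletzter(B_{i_2}))$ it is not literally read off the pseudocode: one needs that $\vorletzter(A_{i_2})$ lies to the right of the processed vertex and that the relevant last edge lies in $T_{i_2}$, which is really part of the loose-ends-style invariant analysis (the same step as in the proof of Lemma~\ref{lem:looseends}, via Corollary~\ref{cor:leftedges}). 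The clean way to organise the write-up is therefore either to establish this monotonicity first as a standalone fact (it does not depend on Observation~\ref{obs:1}), or to prove Observation~\ref{obs:1}(1) as part of the subsequent invariant analysis; the remaining parts (2) and (3), and the bookkeeping in (1), are a genuinely routine induction by inspection of Algorithm~\ref{alg:matchingEnds}.
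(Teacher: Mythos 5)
Your proposal is correct in substance, but note that the paper gives essentially no proof of Observation~\ref{obs:1}: it is asserted to ``follow directly'' from Algorithm~\ref{alg:matchingEnds}, and every delicate ordering fact you isolate is deferred to the subsequent invariant analysis (your monotonicity claim is precisely Observation~\ref{obs:2}.(3), proved there conditionally on Invariants~(2) and~(3)). So what you do differently is to make the implicit induction explicit: base case after initialization, the case analysis for~(3), the $I_B=\emptyset$ argument for~(2) (here you should also mention the possibility $I_A=\emptyset$, where Line~\ref{line:cancel} makes the step trivial), and, for~(1), the observation that Lines~\ref{line:finish1} and~\ref{line:finish2} always finish an $A$- and a $B$-path of the same current index, so the only danger is a downshift relabelling a finished path -- which you correctly exclude via the auxiliary claim that finished ends exceed the currently processed vertex. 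You are also right that the needed facts (new end vertices $\links_{i_1}(\vorletzter(\cdot))$ strictly smaller than, in particular distinct from, the processed vertex) cannot be read off the pseudocode and require Invariant~(2)/(3)-type information together with Corollary~\ref{cor:leftedges}. The one caveat concerns your preferred organization: the monotonicity is \emph{not} independent of Observation~\ref{obs:1}, since the paper's inductive proof of the invariants (on which Observation~\ref{obs:2}.(3) is conditioned) itself invokes Observation~\ref{obs:1}.(1) and~(3); a ``standalone'' monotonicity lemma would therefore have to re-prove parts of Invariants~(1)--(3) inside it. Your second option -- carrying Observation~\ref{obs:1} (at least part~(1) and the ``new ends $\neq v$'' facts needed for~(3)) along in one simultaneous induction with the invariants -- is the sound formalization, and it is what the paper's presentation amounts to implicitly. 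Finally, in your argument for~(1) replace ``non-increasing'' by the strict decrease you state in the obstacle paragraph: mere non-increase would allow a vertex to be processed twice, exactly the situation the auxiliary claim must exclude.
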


We prove Lemma~\ref{lem:paths} by showing that Algorithm~\ref{alg:matchingEnds} outputs the desired paths. Before the processing step of any active vertex $v$, the paths $A_i$ and $B_i$ satisfy several invariants, the most crucial of which are that they are $\{v+1,\ldots,s-1\}\backslash\{t\}$-disjoint and that the vertices of every $A_i$ and $B_i$ are decreasing in $<$. We will prove the following invariants.

\begin{invariants}
Let $v < s$ be the largest active vertex, or $v := 0$ if there is no active vertex left. Before processing $v$, the following invariants are satisfied for every $1 \leq i \leq k$:
\begin{enumerate}
	\item[(1)] $A_i$ starts with $s$, $B_i$ starts with $t$, and the vertices of both paths are strictly decreasing in $<$.
	\item[(2)] The paths $A_i$ and $B_i$ are finished if and only if $v<\letzter(A_i) = \letzter(B_i)$. If $A_i$ and $B_i$ are not finished, then $r_i\leq \letzter(A_i) \leq v$, $r_i \leq \letzter(B_i) \leq v$, and the last edge of $A_i$ as well as the last edge of $B_i$ (if $B_i$ has length at least $1$) are in $T_i$.
	\item[(3)] $\vorletzter(A_i) > v$. If $v\geq t$, $B_i = (t)$. If $v<t$, either $B_i$ is finished with $B_i = (t)$ or $B_i$ has length at least $1$ such that $\vorletzter(B_i)>v$.  
	\item[(4)] Let $w \neq t$ be a vertex with $v<w<s$. If $w \in A_i \cup B_i$, $w$ is neither contained in a path $A_j \neq A_i$ nor in a path $B_j \neq B_i$. If $w \in A_i \cap B_i$, $A_i$ and $B_i$ are finished with $w = \letzter(A_i)= \letzter(B_i)$.
	\item[(5)] $A_i \cup B_i \subseteq T_1 \cup \cdots \cup T_k$
\end{enumerate}
\end{invariants}

Before proving these invariants, we first clarify some of their consequences. Invariant~(2) implies that the algorithm has finished all paths when $v = 0$ and that the end vertices of $A_i$  and $B_i$ match for all $i$. Invariants~(1) and (3) will be necessary to prove Invariant~(4), which in turn implies that the paths $A_1 \cup B_1,\dots,A_k \cup B_k$ are internally vertex-disjoint. Invariant~(5) settles the first part of the second claim of Lemma~\ref{lem:paths}. We continue with further consequences of some of these invariants, which will be used later.

\begin{observation}\label{obs:2}
Let $v < s$ be the largest active vertex, or $v := 0$ if there is no active vertex left. Before processing $v$, we have the following observations:
\begin{itemize}
	\item[(1)] Assume Invariants~(1) and (3). Then, for every $1 \leq i \leq k$, all vertices of the paths $A_i$ and $B_i$ except $\letzter(A_i)$ and $\letzter(B_i)$ are greater than $v$ before processing $v$.
	\item[(2)] Assume Invariant~(2). Then no finished path is modified while processing $v$, as Algorithm~\ref{alg:matchingEnds} modifies $A_i$ or $B_i$, $1 \leq i \leq k$, only if at least one of them ends at $v$.
	\item[(3)] Assume Invariants~(2) and~(3). Then the largest active vertex after processing $v > 0$ is smaller than $v$.
	
	The following proof uses the fact that new active vertices are only created after Lines~\ref{line:ReplA} and~\ref{line:ReplB}. When Line~\ref{line:ReplA} replaces the end vertex of $A_{i_2}$, the new end vertex $\links_{i_1}(\vorletzter(A_{i_2}))$ is smaller than the old one due to Corollary~\ref{cor:leftedges}. By the previous observation, $A_{i_2}$ is not finished. According to Invariant~(2), the new end vertex of $A_{i_2}$ is smaller than $v$.
	
	When invoking Line~\ref{line:ReplB}, $|I_B| \geq 2$ and, hence, there is at least one path $B_i$ that ends at $v \neq t$: According to Invariant~(3), $v < t$ and $B_{i_2}$ has length at least one such that $\vorletzter(B_i)>v$. Thus, the new end vertex of $B_{i_2}$ is smaller than the old one and, in particular, smaller than $v$.
\end{itemize}
\end{observation}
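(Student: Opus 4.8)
The plan is to read off all three parts of Observation~\ref{obs:2} from the stated invariants, the line-by-line structure of Algorithm~\ref{alg:matchingEnds}, and Corollary~\ref{cor:leftedges}. No induction is required: parts~(1) and~(2) are assertions about the configuration of the paths \emph{before} the processing step of $v$ (and are granted the relevant invariants as hypotheses), while part~(3) compares the set of active vertices \emph{after} the step with $v$.

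For part~(1) I would argue by monotonicity alone. Fix $i$. By Invariant~(1) the vertices of $A_i$ strictly decrease, so every vertex other than $\letzter(A_i)$ is at least $\vorletzter(A_i)$, and Invariant~(3) gives $\vorletzter(A_i)>v$; hence all but the last vertex of $A_i$ exceed $v$. For $B_i$ I split on Invariant~(3): either $B_i=(t)$, in which case the claim is vacuous because $t=\letzter(B_i)$, or $B_i$ has length at least one with $\vorletzter(B_i)>v$ and the identical argument applies. Part~(2) I would reduce to bookkeeping over the lines that change a path's content. By Invariant~(2) a finished pair satisfies $v<\letzter(A_i)=\letzter(B_i)$, so neither $A_i$ nor $B_i$ ends at $v$, i.e.\ $i\notin I_A\cup I_B$. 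When $v\neq t$, the only content-changing lines are~\ref{line:ReplA}, \ref{line:ReplB}, \ref{line:predetReplA} and~\ref{line:predetReplB}, all acting on indices of $I_A\cup I_B\cup\{j\}=I_A\cup I_B$ (recall $j=\max(I_A\cup I_B)$ is defined, hence lies in $I_A\cup I_B$). When $v=t$, the branch appends only to paths with $\letzter(A_i)\neq t$, and for these $B_i=(t)$ still ends at $v=t$ by Invariant~(3); moreover no pair is finished at this moment, since Invariant~(3) forces every $B_i=(t)$, and such a path, ending at $t=v$, cannot satisfy the finishedness condition $\letzter(B_i)>v$ of Invariant~(2). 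In every case a modified index lies in $I_A\cup I_B$, which is disjoint from the finished indices, so no finished path is touched.

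Part~(3) is the heart of the statement and the main obstacle. Because $v$ is the largest active vertex before the step and is unmarked at its end, every active vertex that is not newly created during the step is already smaller than $v$; thus it suffices to bound each \emph{newly} marked vertex below $v$. I would enumerate the marking lines. Lines~\ref{line:active1}, \ref{line:predetReplA} and~\ref{line:predetReplB} mark a left-neighbour of $v$ (namely $\links_i(t)$ with $t=v$, respectively $\links_j(v)$), which is smaller than $v$ by the definition of a left-edge.

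The delicate cases are the end-replacements of Lines~\ref{line:ReplA} and~\ref{line:ReplB}, and handling them is where the real work lies. For a consecutive pair $i_1<i_2$ in Line~\ref{line:ReplA}, the path $A_{i_2}$ is unfinished by part~(2), so Invariant~(2) gives $\letzter(A_{i_2})\leq v$ and puts the last edge of $A_{i_2}$ into $T_{i_2}$, whence $\letzter(A_{i_2})=\links_{i_2}(\vorletzter(A_{i_2}))$; Corollary~\ref{cor:leftedges} then yields $\links_{i_1}(\vorletzter(A_{i_2}))<\links_{i_2}(\vorletzter(A_{i_2}))=\letzter(A_{i_2})\leq v$. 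The subtle point I want to stress is that this estimate never uses $\letzter(A_{i_2})=v$, only $\letzter(A_{i_2})\leq v$; this is exactly what is needed to cover the pair whose larger index equals $j$ when $j$ was adjoined to $I_A$ without $A_j$ ending at $v$ (then $A_j$ ends at some $w<v$ and the new end is still $<w\leq v$). Line~\ref{line:ReplB} is analogous once I first establish $v<t$: the mere existence of a pair in $I_B\cup\{j\}$ forces some $B_i$ to end at $v\neq t$, which by Invariant~(3) is impossible unless $v<t$; then Invariant~(3) gives $\vorletzter(B_{i_2})>v$ so that the same Corollary~\ref{cor:leftedges} inequality goes through. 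This shows every newly marked vertex is below $v$, completing part~(3).
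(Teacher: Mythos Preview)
Your argument is correct and follows essentially the same route as the paper: monotonicity from Invariants~(1) and~(3) for part~(1), index bookkeeping against $I_A\cup I_B$ for part~(2), and Corollary~\ref{cor:leftedges} applied to each marking line for part~(3); in fact you are more careful than the paper in enumerating \emph{all} marking lines and in explicitly covering the case $i_2=j\notin I_A$ (resp.\ $j\notin I_B$), where the paper's ``$|I_B|\geq 2$'' is slightly too strong. The only blemish is that in the $v=t$ branch of part~(2) you invoke Invariant~(3), which is not among the hypotheses of part~(2); this is harmless since Observation~\ref{obs:1}.(2) already gives $B_i=(t)$ directly from the algorithm, so no pair can be finished at that moment.
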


We now prove Invariants~(1)--(5) for all largest active vertices encountered in the algorithm. Immediately after initializing the paths $A_1,\ldots,A_k$ and $B_1, \ldots, B_k$, the largest active vertex is $\letzter(A_k) < s$ or $t$. It is easy to see that all five invariants are satisfied for $v = \max\{ \letzter(A_k),t\}$, i.e.\ before processing the first active vertex. We will prove that processing any largest active vertex $v$ preserves all five invariants for the largest active vertex $v'$ after having processed $v$ (here, we set $v':=0$ if $v$ is the last active vertex). For this purpose, let $A_i$ and $B_i$ be the paths with index $i$ immediately before processing $v$ and let $A'_i$ and $B'_i$ be the paths with index $i$ immediately before processing $v'$. By induction hypothesis, $A_i$ and $B_i$ satisfy the desired invariants for the vertex $v$; hence, we can also use their implications in Observation~\ref{obs:2}. In particular, $v' < v$ due to Observation~\ref{obs:2}.(3).

We assume first $v \neq t$ and that $v$ is not the end vertex of any of the paths $A_1,\dots,A_k,B_1,\dots,B_k$ (the following cases can thus assume that at least one such path ends at $v$). Then Line~\ref{line:cancel} cancels the processing step without changing any path. According to Observation~\ref{obs:2}.(1), $v$ is not contained in any of the paths $A_1, \dots, A_k, B_1, \dots, B_k$. The paths $A'_1=A_1, \dots, A'_k=A_k$, $B'_1=B_1, \dots, B'_k=B_k$ clearly satisfy Invariants~(1) and~(5) also for $v'$. They also satisfy Invariant~(3) for $v'$, since $v'\geq t$ if $v>t$, and Invariant~(2) for $v'$, since the largest end vertex of any unfinished path is at most $v'$, according to Observation~\ref{obs:1}.(3). By Invariant (2) for $v$ and Observation~\ref{obs:2}.(1), any vertex $w \neq t$ with $v'<w\leq v$ is contained in $A_i$ or $B_i$ only if $A_i$ or $B_i$ is unfinished and has end vertex $w$. Then $w$ is active due to Observation~\ref{obs:1}.(3), which contradicts the choice of $v'$. Any $w>v$ would give Invariant~(4) by induction hypothesis. Thus, Invariant (4) is satisfied for $v'$.

Next, we assume $v>t$. Then the paths $A_1,\dots,A_k$ are all not finished and, according to Invariant~(3), $B_1 = \dots = B_k = (t)$. We thus have $I_B=\emptyset$ in Line~\ref{line:end-B} and $j=\max I_A$ in Line~\ref{line:def-j}. In particular, the paths $B_1, \dots, B_k$ are not modified while processing $v$ and the changes of $A_1, \dots, A_k$ are identical to the ones in the loose ends algorithm. Thus, $A'_1, \dots, A'_k$ satisfy the invariants of the loose ends algorithm for $v'$, which imply the invariants of Algorithm~\ref{alg:matchingEnds} for $v'$, as $t \leq v'$.

We assume $v=t$. Then $\links_i(t)$ exists in Line~\ref{line:predetB}, since we have $\letzter(A_i) \neq t$, which implies by Invariant~(2) that $A_i$ is not finished and $r_i \leq \letzter(A_i) < t=v$. The paths $A'_1,\dots,A'_k,B'_1,\dots,B'_k$ satisfy Invariant~(1) for $v'$ due to $\links_i(t)<t$ for every $i$, Observation~\ref{obs:1}.(2) and Invariant~(1) for $v$. We prove Invariant~(2) for $v'$: If $A'_i$ and $B'_i$ are finished, $\letzter(A_i)=t$ before processing $v$ and we conclude $v' < \letzter(A'_i)=\letzter(B'_i) = t$. Otherwise, the algorithm leaves both $A'_i$ and $B'_i$ unfinished by setting $A'_i := A_i$ and $B'_i := (t, \links_i(t))$. Since both $\letzter(B'_i) = \links_i(t)$ and $\letzter(A'_i)=\letzter(A_i)$ are active after processing $t$, $r_i \leq \letzter(B'_i) \leq v'$ and $r_i \leq \letzter(A'_i) \leq v'$. In particular, both $A'_i$ and $B'_i$ are of length at least one and have their last edge in $T_i$, which gives Invariant~(2) for $v'$.

For Invariants~(3) and (5), note that $A'_i=A_i$ and $B'_i=(t)$ if $B'_i$ is finished and otherwise $A'_i=A_i$ and $B'_i=(t,\links_i(t))$. It remains to prove Invariant~(4). If $A_i$ contains a vertex $w$ with $v'<w<t$ (again, any larger $w$ would give Invariant~(4) by induction hypothesis), $w$ is the active end vertex of $A_i$ by Observations~\ref{obs:2}.(1) and~\ref{obs:1}.(3), which contradicts the choice of $v'$. Since $B'_i$ consists of $t$ and at most one active vertex that is smaller or equal to $v'$, the vertex $w$ of Invariant~(4) does not exist. This proves Invariant~(4) for $v'$.

It only remains to assume $v<t$. According to the statement of our first case, we can additionally assume $I_A \cup I_B \neq \emptyset$ (as defined in Lines~\ref{line:end-A} and~\ref{line:end-B}). Let $j$ be the index chosen in Line~\ref{line:def-j} of processing step $v$. Then Observation~\ref{obs:1}.(1) ensures that both $A_j$ and $B_j$ are unfinished; hence, the downshifts in Lines~\ref{line:DownshiftA} and~\ref{line:DownshiftB} are well-defined. The paths $A'_1,\dots,A'_k,B'_1,\dots,B'_k$ satisfy Invariant~(1) for $v'$ due to Invariant~(1) for $v$, the fact that $\links_i(v)<v$ for all $i$, and Observation~\ref{obs:2}.(3).

We prove Invariant~(2) for $v'$: By induction, this invariant is true for the paths that were finished before processing $v$. For all paths $A'_i$ and $B'_i$ that are unfinished after processing $v$, Observation~\ref{obs:1}.(3) ensures that $\letzter(A'_i) \leq v'$ and $\letzter(B'_i) \leq v'$.  The cyclic downshifts in processing step $v$ imply that the last edge of $A'_i$ and the last edge of $B'_i$ are in $T_i$ for every $i \neq j$. Since $A'_j$ and $B'_j$ are the only paths that may change their status from unfinished to finished during the processing step of $v$, this gives Invariant~(2) for all paths except for $A'_j$ and $B'_j$. These two paths are finished after processing step $v$ if and only if $v'< v = \letzter(A'_j) = \letzter(B'_j)$. In this case, Invariant~(2) for $v'$ is satisfied. The case $\letzter(A'_j) \neq v = \letzter(B'_j)$ (and, by symmetry, the case $\letzter(A'_j) = v \neq \letzter(B'_j)$) only occurs if $I_A = \emptyset$ in processing step $v$. In this case $A'_j=A_j$ and $\letzter(B'_j) = \links_j(v)$. By Observation~\ref{obs:1}.(3) and Invariant~(2) for $v$, $r_j\leq \letzter(A'_j) \leq v'$ and the last edge of $A_j$ is in $T_j$. This gives Invariant~(2) for $v'$.

In order to prove Invariant~(3) for $v'$, observe that all modified paths are unfinished and have length at least one. Thus, we only have to prove that, when appending a vertex to a path in processing step $v$, the new second last vertex (i.e.\ the old end vertex) is greater than $v'$. The algorithm does this only in Lines~\ref{line:predetReplA} and~\ref{line:predetReplB}, in which $v>v'$ is the new second last vertex.

We prove Invariant~(4) for $v'$. If $v<w<s$, this follows directly from Invariant~(4) for $v$, so let $v' < w \leq v$. First, suppose $w<v$ such that $w$ is contained in $A'_i \cup B'_i$. Then $w$ is active after processing $v$ by Invariants~(1) and~(3) for $v'$, which contradicts the choice of $v'$. Second, suppose $w=v$. Let $j$ be the index chosen in Line~\ref{line:def-j} of processing step $v$. If both paths $A'_j$ and $B'_j$ contain $v$, both end at $v$ and are finished by Line~\ref{line:finish2}. It remains to prove that $v=w$ is not contained in any other path than $A'_j$ and $B'_j$. If any path $A_i$ or $B_i$ contains $v$ before processing step $v$, it contains $v$ as end vertex by Observation~\ref{obs:2}(1). All paths with $v$ as end vertex, except for the one with smallest index, get a new end vertex in processing step $v$. After the cyclic downshift in processing step $v$, only the paths with index $j$ contain $v$. This proves Invariant~(4) for $v'$.

Invariant~(5) follows straight from the definition of $\links_i$. This concludes the proof of Invariants~(1)--(5) for every $v < s$.

As in the loose ends algorithm, the running time of Algorithm~\ref{alg:matchingEnds} is upper bounded by $O(|E(T_1 \cup \dots \cup T_k)|)$ and thus by $O(n+m)$, as it suffices to visit every edge in these trees $T_1,\dots,T_k$ only a constant number of times.

\subsection{Variants}
Several variants of Menger's theorem~\cite{Menger1927} are known. Instead of computing $k$ paths between two vertices, we can compute paths between a vertex and a set of vertices (\emph{fan variant}) and between two sets of vertices (\emph{set variant}). Our algorithm extends to these variants.

\begin{theorem}
Let $G$ be a simple graph and $<$, $s$ and $T_1,\dots,T_k$ be defined as in Section~\ref{sec:prel}.
\begin{itemize}
	\item[(i)] (Fan variant) Let $T=\{t_1,\dots,t_k\}$ be a subset of $V$ such that $r_i \leq t_i < s$ for every $i$. Then $k$ internally vertex-disjoint paths between $s$ and $T$ can be computed in time $O(|E(T_1 \cup \dots \cup T_k)|) \subseteq O(n+m)$.
	\item[(ii)] (Set variant) Let $T = \{t_1,\dots,t_k\}$ and $S=\{s_1, \dots, s_k\}$ be disjoint vertex sets such that $r_i \leq t_i <s$ and $r_i \leq s_i \leq s$ for every $i$. Then $k$ internally vertex-disjoint paths between $S$ and $T$ can be computed in time $O(|E(T_1 \cup \dots \cup T_k)|) \subseteq O(n+m)$.
\end{itemize}
\end{theorem}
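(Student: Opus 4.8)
The plan is to reduce both variants to Lemma~\ref{lem:paths} (Algorithm~\ref{alg:matchingEnds}) by adding auxiliary vertices to $G$ that merge each vertex set into a single vertex, while being careful that the resulting order is still a MAO with the right forest decomposition. For the fan variant, I would introduce a new vertex $s^*$ and place it at the very end of the order (position $n+1$), joining $s^*$ to each $t_i$ by a single edge. Since the $t_i$ are distinct and each $t_i$ has left-degree at least $i$ in the original order (as $r_i \le t_i$, so $t_i \in T_i$ by Lemma~\ref{lem:interval}), adding $s^*$ as the new maximum preserves the MA property: $s^*$ now has $k$ left-edges, which is at most the degree of the old vertex $n$ only if $k \le \delta$, but in fact we only need $s^*$ to \emph{exist} as the largest vertex, and its left-degree $k$ is fine because no vertex to its right imposes a constraint. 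More care is needed: the forest decomposition must assign the edge $\{t_i, s^*\}$ to $F_i$. Since the neighbors of $s^*$ among $\{1,\dots,n\}$ are $t_1 < t_2 < \cdots < t_k$, the definition of the forest decomposition puts $\{t_i,s^*\}$ precisely into $F_i$, as desired. Hence in the augmented graph $s^*$ lies in the tree $T_i^* \supseteq T_i \cup \{\,\{t_i,s^*\}\,\}$ of $F_i$, and the root of $T_i^*$ is still $r_i$. Now run Algorithm~\ref{alg:matchingEnds} with start vertex $s^*$ and "second" start vertex $s$ (note $s < s^*$ and $s \in T_k^*$); it outputs $k$ internally vertex-disjoint $s^*$-$s$-paths contained in $T_1^* \cup \cdots \cup T_k^*$. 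Deleting $s^*$ from each path and noting that the path with index $i$ left $s^*$ via the edge $\{s^*, t_i\}$, we obtain $k$ internally vertex-disjoint paths from $s$ to the distinct vertices $t_1,\dots,t_k$, i.e.\ an $s$-$T$ fan. The running time is $O(|E(T_1^* \cup \cdots \cup T_k^*)|) = O(|E(T_1\cup\cdots\cup T_k)| + k) \subseteq O(n+m)$.

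For the set variant, I would do the same on both sides: add $t^*$ at position $n+1$ joined to $t_1 < \cdots < t_k$, and add $s^*$ at position $n+2$ joined to $s_1 < \cdots < s_k$ (here $r_i \le s_i \le s$, and if some $s_i = s$ we simply get $s^*$ adjacent to $s$, which is harmless). One subtlety: $S$ and $T$ are disjoint by hypothesis, so $s^*$ and $t^*$ are non-adjacent and the augmented graph is still simple. As before, the edge $\{t_i, t^*\}$ lands in $F_i$ and $\{s_i, s^*\}$ lands in $F_i$, so in the augmented order $t^* \in T_i^{**}$ and $s^* \in T_i^{**}$ for the tree $T_i^{**}$ of $F_i$ containing the old $T_i$; its root is still $r_i$, and $t^* < s^*$ with $t^* \in T_k^{**}$. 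Running Algorithm~\ref{alg:matchingEnds} with start vertices $s^*$ and $t^*$ produces $k$ internally vertex-disjoint $s^*$-$t^*$-paths, each of which (by the index scheme of Invariant~(2), which forces $\letzter(A_i)=\letzter(B_i)$ and forces the last $A$-edge into $T_i$ and similarly the $B$-side) enters $t^*$ via $\{t^*,t_i\}$ and enters $s^*$ via $\{s^*,s_i\}$ for the \emph{same} index $i$; in general we only need that each path uses exactly one $S$-endpoint and one $T$-endpoint, which holds because $s^*$ and $t^*$ have degree exactly $k$ in the augmented graph and the $k$ paths are internally vertex-disjoint and hence use the $k$ edges at $s^*$ and the $k$ edges at $t^*$ in a bijective fashion. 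Deleting $s^*$ and $t^*$ yields $k$ internally vertex-disjoint paths between $S$ and $T$, again in time $O(|E(T_1\cup\cdots\cup T_k)|)$.

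The main obstacle I anticipate is verifying that the augmentation keeps the order a genuine MAO \emph{and} does not disturb the forest decomposition on the original edges — in particular, that inserting $s^*$ (and $t^*$) at the end does not change which forest any old edge belongs to (it does not, since the forest decomposition of an edge depends only on the left-neighborhoods of the lower endpoint, which are unaffected by appending new maxima) and that the new edges fall into the intended forests $F_i$ (which follows directly from listing the left-neighbors of $s^*$ resp.\ $t^*$ in increasing order). A secondary point to check is that Lemma~\ref{lem:paths} applies with the auxiliary vertex in the role of $s$: its hypothesis is just that the "inner" start vertex lies in $T_k^*$ and is smaller, which holds since $s$ (resp.\ $t^*$) is an old vertex in $T_k$ extended by one edge to the new maximum. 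Once these bookkeeping facts are in place, correctness and the running-time bound are immediate from Theorem~\ref{thm:main}/Lemma~\ref{lem:paths}.
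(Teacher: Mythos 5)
Your overall strategy (append auxiliary vertices of degree $k$ adjacent to the target set(s) and invoke Theorem~\ref{thm:main}/Lemma~\ref{lem:paths}) is the same as the paper's, but there is a genuine gap in how you perform the augmentation: you place $s^*$ (and $t^*$) after position $n$, i.e.\ after \emph{all} vertices of $G$, and claim the order remains a MAO because ``no vertex to its right imposes a constraint.'' The MAO condition is violated by vertices to the \emph{left} of $s^*$, namely those in positions $s+1,\dots,n$: for such a vertex $v$ the condition requires $|N(v)\cap\{1,\dots,v-1\}| \geq |N(s^*)\cap\{1,\dots,v-1\}| = k$ (all $t_i$ lie below $s$, hence below $v$), and nothing guarantees these later vertices have $k$ neighbors among $\{1,\dots,v-1\}$. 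Concretely, take $G$ to be a clique on $\{1,\dots,10\}$ plus a pendant vertex $11$ attached to vertex $1$; the ordering $1,\dots,11$ is a MAO, and with $s=10$, $k=9$, $T=\{1,\dots,9\}$ your augmented vertex $s^*$ at position $12$ has $9$ neighbors in $\{1,\dots,10\}$ while vertex $11$ has only $1$, so the extended order is not a MAO. Consequently the hypotheses of Section~\ref{sec:prel} (on which Lemma~\ref{lem:interval}, Corollary~\ref{cor:leftedges} and hence Theorem~\ref{thm:main} rest, e.g.\ the consecutiveness of the trees containing the start vertex) need not hold for your augmented graph, so you cannot simply cite Theorem~\ref{thm:main}; indeed in the example above the component of $F_9$ containing $s^*$ is $\{9,10,12\}$, which is not an interval.

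The missing idea is to first pass to the induced subgraph $G'$ on $\{1,\dots,s\}$: the restriction of $<$ is still a MAO of $G'$, and the relevant parts of $T_1,\dots,T_k$ (in particular the roots $r_i$ and the membership $r_i\le t_i<s$ in $T_i$) are preserved by Lemma~\ref{lem:interval}. Appending the auxiliary vertex at position $s+1$ (and, for the set variant, a second one at $s+2$) then does preserve the MAO property, because every vertex $v\le s$ with $r_j<v$ has at least $j$ left-edges while at most $j$ of the $t_i$ (resp.\ $s_i$) lie below $v$, and the two auxiliary vertices have equal left-degree $k$. With that correction the rest of your argument (the new edges $\{t_i,s^*\}$ landing in $F_i$, applying Theorem~\ref{thm:main} to the pair consisting of the auxiliary vertex and $s$, resp.\ to the two auxiliary vertices, then deleting the auxiliary vertices, and the unchanged running-time bound) goes through and matches the paper's proof.
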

\begin{proof}
Let $G'$ be the subgraph of $G$ that is induced by the vertex set $\{1,\dots,s\}$. Clearly, $<$ is also a MAO of $G'$ (restricted to the vertices $1,\dots,s$) and the relevant parts of the trees $T_1,\dots,T_k$ are preserved by Lemma~\ref{lem:interval}. For (i), augment $G'$ by a new vertex $s+1$ with $k$ edges to $t_1,\dots,t_k$; this preserves that $<$ is a MAO. Applying Theorem~\ref{thm:main} to this graph on the vertices $s$ and $s+1$ gives the claim. For (ii), we augment $G'$ by two new vertices of degree $k$ with neighborhoods $S$ and $T$, respectively, and apply Theorem~\ref{thm:main} on the vertices $s+1$ and $s+2$. In both cases, the running time is not increased.
\end{proof}

\paragraph{Mixed Connectivity.}
Let $G=(V,E)$ be a multigraph and let $\alpha: V \rightarrow \mathbb{N}^+$ be a weight function on its vertices. A set of paths connecting two vertices $s$ and $t$ of $G$ is called $\alpha$-\emph{independent} if every vertex $v \notin \{s,t\}$ is contained in at most $\alpha(v)$ of these paths. Let a multigraph be $\alpha$-\emph{simple} if the number of edges between every two vertices $v$ and $w$ is at most $\min\{\alpha(v),\alpha(w)\}$.

For $\alpha$-simple multigraphs $G$, Nagamochi~\cite{Nagamochi2006} generalized the existential variant of Theorem~\ref{thm:main} by showing that there are $k$ $\alpha$-independent $s$-$t$-paths, where $s$ and $t$ are chosen as in Theorem~\ref{thm:main}. It is possible to modify Algorithm~\ref{alg:matchingEnds} to compute also these paths without increasing its running time, by replacing the two cyclic downshifts by a more complicated algorithm to transform the path indices.

\paragraph{Acknowledgments.} We wish to thank Solomon Lo for pointing out a connection between MAOs and Mader's proof about pendant pairs.

\bibliographystyle{abbrv}
\bibliography{Jens}

\begin{thebibliography}{10}

\bibitem{Mehlhorn1999b}
S.~R. Arikati and K.~Mehlhorn.
\newblock A correctness certificate for the {Stoer-Wagner} min-cut algorithm.
\newblock {\em Information Processing Letters}, 70(5):251--254, 1999.

\bibitem{Diestel2010}
R.~Diestel.
\newblock {\em Graph Theory}.
\newblock Springer, fourth edition, 2010.

\bibitem{Even1975}
S.~Even and R.~E. Tarjan.
\newblock Network flow and testing graph connectivity.
\newblock {\em SIAM Journal on Computing}, 4(4):507--518, 1975.

\bibitem{Frank1994}
A.~Frank.
\newblock On the edge-connectivity algorithm of {N}agamochi and {I}baraki.
\newblock Laboratoire Artemis, IMAG, Universit\'e J. Fourier, Grenoble, March
  1994.

\bibitem{Henzinger1997a}
M.~R. Henzinger.
\newblock A static 2-approximation algorithm for vertex connectivity and
  incremental approximation algorithms for edge and vertex connectivity.
\newblock {\em Journal of Algorithms}, 24:194--220, 1997.

\bibitem{Karzanov1973}
A.~V. Karzanov.
\newblock O nakhozhdenii maksimal'nogo potoka v setyakh spetsial'nogo vida i
  nekotorykh prilozheniyakh (in {R}ussian; {O}n finding a maximum flow in a
  network with special structure and some applications).
\newblock {\em Matematicheskie Voprosy Upravleniya Proizvodstvom}, 5:81--94,
  1973.

\bibitem{Mader1971b}
W.~Mader.
\newblock Existenz gewisser {K}onfigurationen in n-ges\"attigten {G}raphen und
  in {G}raphen gen\"ugend gro{\ss}er {K}antendichte.
\newblock {\em Mathematische Annalen}, 194:295--312, 1971.

\bibitem{Mader1973}
W.~Mader.
\newblock Grad und lokaler {Z}usammenhang in endlichen {G}raphen.
\newblock {\em Mathematische Annalen}, 205:9--11, 1973.

\bibitem{Mader1996}
W.~Mader.
\newblock On vertices of degree n in minimally n-connected graphs and digraphs.
\newblock {\em Bolyai Society Mathematical Studies (Combinatorics, Paul
  Erd\H{o}s is Eighty, Keszthely, 1993)}, 2:423--449, 1996.

\bibitem{McConnell2011}
R.~M. McConnell, K.~Mehlhorn, S.~N\"aher, and P.~Schweitzer.
\newblock Certifying algorithms.
\newblock {\em Computer Science Review}, 5(2):119--161, 2011.

\bibitem{Menger1927}
K.~Menger.
\newblock Zur allgemeinen {K}urventheorie.
\newblock {\em Fundamenta Mathematicae}, 10:96--115, 1927.

\bibitem{Nagamochi2006}
H.~Nagamochi.
\newblock Sparse connectivity certificates via {MA} orderings in graphs.
\newblock {\em Discrete Applied Mathematics}, 154(16):2411--2417, 2006.

\bibitem{Nagamochi1992b}
H.~Nagamochi and T.~Ibaraki.
\newblock Computing edge-connectivity in multigraphs and capacitated graphs.
\newblock {\em SIAM Journal on Discrete Mathematics}, 5(1):54--66, 1992.

\bibitem{Nagamochi2008}
H.~Nagamochi and T.~Ibaraki.
\newblock {\em Algorithmic Aspects of Graph Connectivity}.
\newblock Cambridge University Press, 2008.

\bibitem{Schmidt2013}
J.~M. Schmidt.
\newblock Contractions, removals and certifying 3-connectivity in linear time.
\newblock {\em SIAM Journal on Computing}, 42(2):494--535, 2013.

\bibitem{Stoer1997}
M.~Stoer and F.~Wagner.
\newblock A simple min-cut algorithm.
\newblock {\em Journal of the ACM}, 44(4):585--591, 1997.

\bibitem{Tarjan1984}
R.~E. Tarjan and M.~Yannakakis.
\newblock Simple linear-time algorithms to test chordality of graphs, test
  acyclicity of hypergraphs, and selectively reduce acyclic hypergraphs.
\newblock {\em SIAM Journal on Computing}, 13(3):566--579, 1984.

\bibitem{Whitney1932a}
H.~Whitney.
\newblock Non-separable and planar graphs.
\newblock {\em Transactions of the American Mathematical Society},
  34(1):339--362, 1932.

\end{thebibliography}


\end{document}